\providecommand{\U}[1]{\protect\rule{.1in}{.1in}}
\newtheorem{theorem}{Theorem}
\newtheorem{proposition}[theorem]{Proposition}
\newenvironment{proof}[1][Proof]{\noindent\textbf{#1.} }{\ \rule{0.5em}{0.5em}}
\begin{document}
%
\sectionfont{\bfseries\large\sffamily}%
%

\subsectionfont{\bfseries\sffamily\normalsize}%
%

\noindent
{\sffamily\bfseries\Large
Cross-screening in observational studies that test many hypotheses}%
%

\noindent
\textsf{Qingyuan Zhao, Dylan S. Small, Paul R. Rosenbaum}%
\footnote{\textit{Address for correspondence:} Department of Statistics, The
Wharton School, University of Pennsylvania, Jon M. Huntsman Hall, 3730 Walnut
Street, Philadelphia, PA 19104-6340 USA. \ \textsf{E-mail:}
dsmall@wharton.upenn.edu. \ 25 February 2017.}%

\noindent
\textsf{University of Pennsylvania, Philadelphia}%

\noindent
\textsf{Abstract. \ We discuss observational studies that test many causal
hypotheses, either hypotheses about many outcomes or many treatments. \ To be
credible an observational study that tests many causal hypotheses must
demonstrate that its conclusions are neither artifacts of multiple testing nor
of small biases from nonrandom treatment assignment. \ In a sense that needs
to be defined carefully, hidden within a sensitivity analysis for nonrandom
assignment is an enormous correction for multiple testing: in the absence of
bias, it is extremely improbable that multiple testing alone would create an
association insensitive to moderate biases. \ We propose a new strategy called
\textquotedblleft cross-screening,\textquotedblright\ different from but
motivated by recent work of Bogomolov and Heller on replicability.
\ Cross-screening splits the data in half at random, uses the first half to
plan a study carried out on the second half, then uses the second half to plan
a study carried out on the first half, and reports the more favorable
conclusions of the two studies correcting using the Bonferroni inequality for
having done two studies. \ If the two studies happen to concur, then they
achieve Bogomolov-Heller replicability; however, importantly, replicability is
not required for strong control of the family-wise error rate, and either
study alone suffices for firm conclusions. In randomized studies with a few
hypotheses, cross-split screening is not an attractive method when compared
with conventional methods of multiplicity control, but it can become
attractive when hundreds or thousands of hypotheses are subjected to
sensitivity analyses in an observational study. \ We illustrate the technique
by comparing 46 biomarkers in individuals who consume large quantities of fish
versus little or no fish.}%

\noindent
\textsf{Keywords: \ Bonferroni inequality; causal inference; design
sensitivity; observational study; replicability; sample splitting; sensitivity
analysis.}

\newpage

\section{Introduction: testing many hypotheses in observational studies}

\label{secIntro}

\subsection{Sensitivity analyses and corrections for multiple testing affect
one another}

\label{ssIntroIdea}

To be credible, an observational or nonrandomized study that tests many null
hypotheses about treatment effects must demonstrate that claimed effects are
neither an artifact of testing many hypotheses nor a consequence of small
departures from randomized treatment assignment. \ These two demonstrations
are related. \ Tiny treatment effects have almost no chance of being judged
insensitive to small unmeasured biases from nonrandom assignment; see
Rosenbaum (2015, \S 5). \ With many null hypotheses, tiny treatment effects
also have almost no chance of being distinguished from artifacts of multiple
testing in studies of moderate sample size. \ These two considerations
motivate screening to eliminate null or negligible treatment effects before
testing, thereby greatly reducing the correction for multiple testing.

To see that tiny effects have almost no chance of being distinguished from
artifacts of multiple testing in studies of moderate sample size, suppose
there are $I$ independent observations on each of $K$ independent outcomes,
where outcome $k$ is $N\left(  \tau_{k},1\right)  $, with the $k$th null
hypothesis asserting $H_{k}:\tau_{k}=0$, where in fact $\tau_{1}=\tau>0$ and
$\tau_{2}=\tau_{3}=\cdots=\tau_{K}=0$. \ In this case, the probability that
the only active group, $k=1$, has a sample mean above the $K-1$ other group
means is $\varpi_{\tau,K,I}=\int_{-\infty}^{\infty}\Phi\left(  y/\sqrt
{I}\right)  ^{K-1}\,\phi\left\{  \left(  y-\tau\right)  /\sqrt{I}\right\}
/\sqrt{I}\,dy$, where $\Phi\left(  \cdot\right)  $ and $\phi\left(
\cdot\right)  $ are, respectively, the standard Normal cumulative and density
functions. \ With $I=100$ observations and $K=100$ hypotheses, $\varpi
_{\tau,K,I}$ is 0.082 for $\tau=0.1$, 0.501 for $\tau=0.25$ and 0.988 for
$\tau=0.5$. \ So, we are unlikely to locate the one real effect if $\tau=0.1$
and very likely to locate it if $\tau=0.5$. \ As $I\rightarrow\infty$ with
$\tau$ and $K$ fixed, $\varpi_{\tau,K,I}\rightarrow1$; however, this limiting
calculation is less relevant in the problems we often face in which $K$ is
fairly large and $I$ is not enormous relative to $K$. \ An effect of
$\tau=0.5$ with $I=100$ is also likely to be judged insensitive to substantial
biases from nonrandom treatment assignment; see Rosenbaum (2010, \S 14.2, p.
268). \ As will be seen in later sections, we can reduce the magnitude of a
correction for multiple testing by screening out negligible effects that never
had a chance of surviving a sensitivity analysis, and had little chance of
surviving a correction for multiple testing.

In the current paper, we propose a new technique called cross-screening. \ The
sample is split in half at random. \ The first half is used to plan the
analysis of the second half, for instance, selecting null hypotheses that
appear to be false, the one-tailed direction of the likely departure from the
null, and the best test statistic to use in testing. \ In parallel, the second
half is used to plan the analysis of the first half. \ Both halves are
analyzed using these data-derived plans, and the more favorable results of the
two analyses are reported with a Bonferroni correction for performing two
analyses. \ Both halves are used to plan and both halves are used to test, so
cross-screening uses all of the data to test hypotheses. \ Cross-screening
strongly controls the family-wise error rate, as discussed in
\S \ref{ssCrossScreenDescribe}. \ We compare cross-screening to alternative
methods, such as a Bonferroni correction for testing $K$ two-sided hypotheses,
or screening using a small fraction of the data. \ Cross-screening performs
poorly except when $K$ is large and $I$ is not extremely large, but it often
wins decisively in sensitivity analyses with large $K$ and moderate $I/K$.
\ Cross-screening is related to, though distinct from, a concept of
replicability developed by Bogomolov and Heller (2013); see
\S \ref{ssCrossScreenReplicability} and \S \ref{secNonrandomCS} for detailed
discussion. \ Cross-screening rejects hypothesis $H_{k}$ if either half-sample
rejects $H_{k}$, thereby strongly controlling the family-wise rate; however,
it achieves Bogomolov-Heller replicability if both halves reject $H_{k}$.

\subsection{Outline: a general method; measures of performance}

As it turns out, our simplest results are also our most definite and most
useful results, so we present them first \S \ref{secCrossScreen}, after fixing
ideas with a motivating example in \S \ref{ssIntroExample}. \ Section
\ref{secEvaluation} presents a series of incomplete evaluations of
cross-screening; however, the simulation in \S \ref{ssSimulation} can be read
immediately after \S \ref{secCrossScreen}. \ Alas, the attraction of
cross-screening is its flexibility, but it is this very flexibility that makes
it difficult to offer a definitive evaluation. \ Suffice it to say here that
cross-screening performs well in certain situations and from certain vantage
points, but it performs poorly in others, and a user of the method needs to be
aware of both aspects. \ The formal evaluation in \S \ref{secEvaluation}
requires some notation and definitions that are reviewed in
\S \ref{secNotationReview}, whereas some of these same ideas appear informally
in \S \ref{secCrossScreen}. \ Finally, \S \ref{secNonrandomCS} considers
nonrandom cross-screening.

\subsection{Example: eating fish and biomarkers}

\label{ssIntroExample}

Using the National Health and Nutrition Examination Survey (NHANES) 2013-2014,
we defined high fish consumption as more than 12 servings of fish or shellfish
in the previous month, and low fish consumption as 0 or 1 servings of fish.
\ We matched 234 adults with high fish consumption to 234 adults with low fish
consumption, matching for gender, age, income, race, education and smoking.
\ Income was measured as the ratio of income to the poverty level capped at
five times poverty. \ Smoking was measured by two variables: (i) having smoked
a total of at least 100 cigarettes, and (ii) cigarettes currently smoked per
month. \ Figure 1 shows the distribution of age, income, education and smoking
for the matched high and low groups. \ For brevity, we refer to high fish
consumption as treated, and to low fish consumption as control.

\begin{figure}[t]
  \centering
  \includegraphics[width = \textwidth]{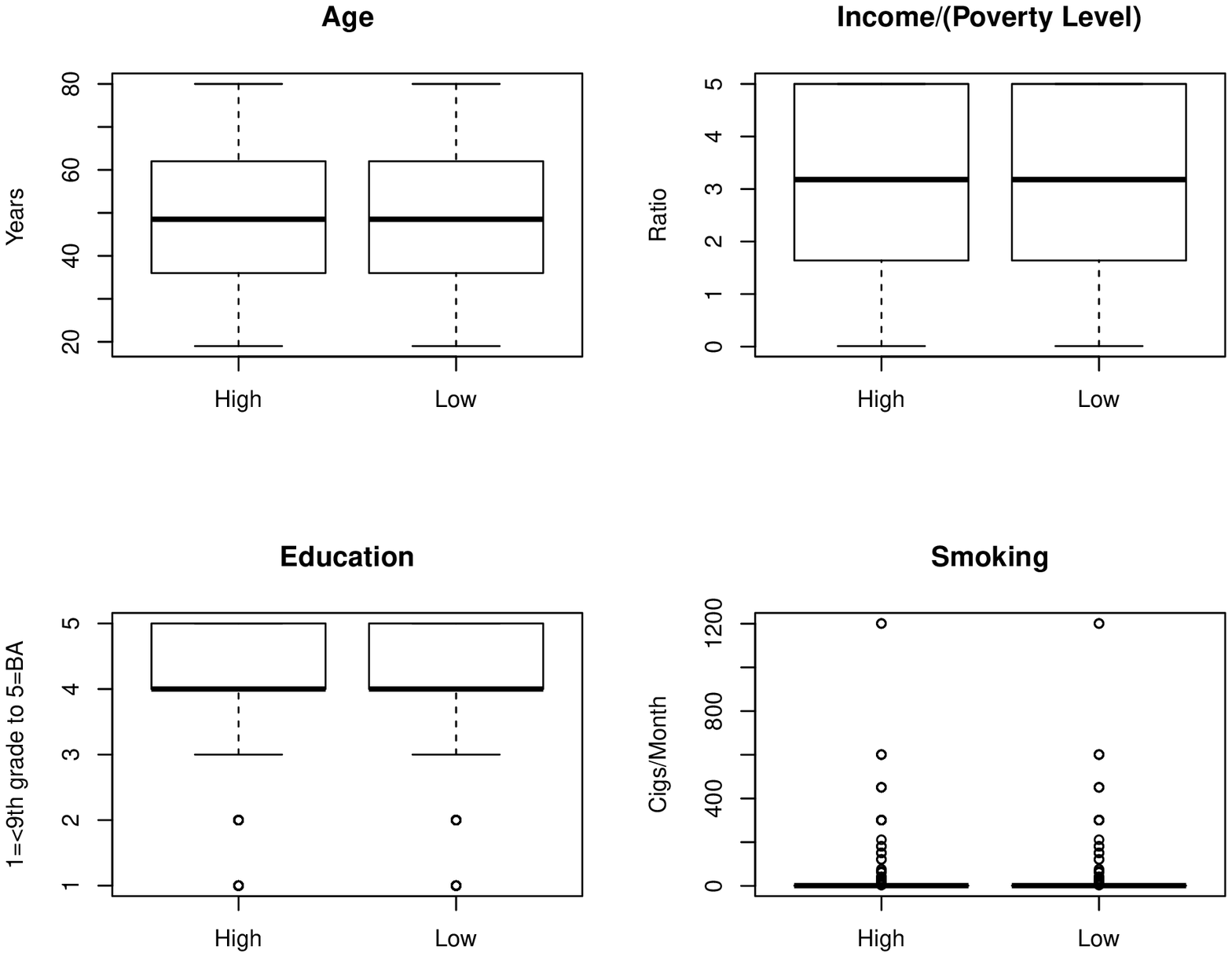}
  \caption{Balance of four covariates in $I = 234$ matched pairs of two people, one with a high consump-
    tion of fish, the other with a low consumption of fish in the past month.}
\label{fig:1}
\end{figure}

We compared the treated and control groups in terms of the logs of $K=46$
biomarkers. \ All biomarkers were nonnegative, and when a biomarker could
equal zero, we added one to that biomarker so that logs could be taken. Base 2
logs are used, so that a treated-minus-control difference of 1 implies the
treated person's response is twice the matched control's response.
\ Generally, if the difference is $y>0$ then the control's response must be
doubled $y$ times to yield the matched treated subject's response, and if the
the difference is $y<0$ then the treated subject's response must be doubled
$y$ times to yield the matched control's response. \ Figure 2 shows the 234
treated-minus-control matched pair differences, $i=1$, 2, \ldots, $I=234$, in
the logs of the $K=46$ biomarkers, $k=1,\ldots,K=46$. \ A few biomarkers show
large differences.

\begin{figure}[t]
  \centering
  \includegraphics[width = \textwidth]{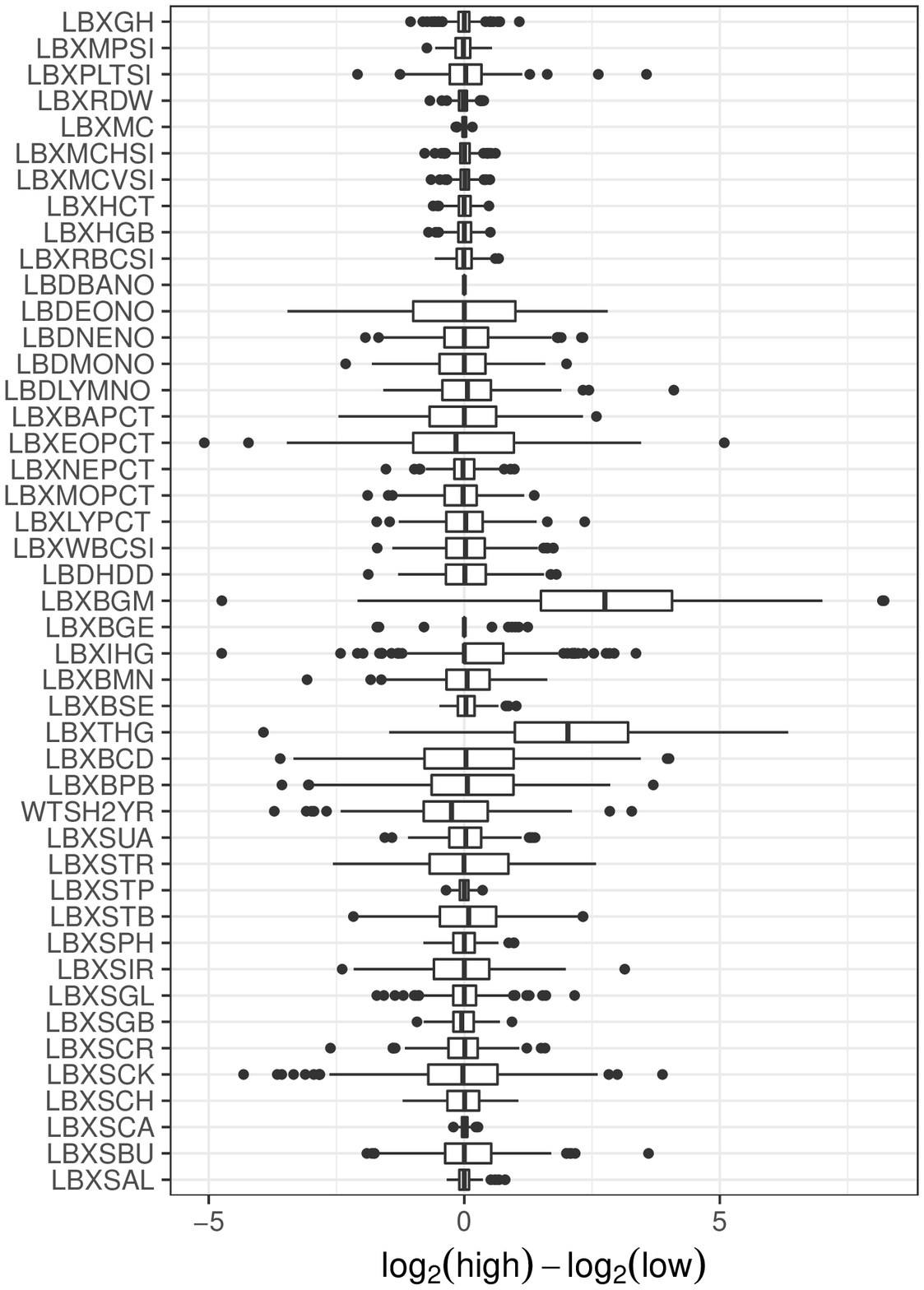}
  \caption{$I = 234$ High-minus-Low matched pair differences,
    $Y_{ik}$, of $\log_{2}(R_{ijk})$ for outcomes $k = 1, \dotsc, 46$. A difference of $1$ on the $\log_2$-scale is $1$ doubling. Note the boxplots for LBXBGM and LBXTHG describing mercury in the blood.}
  \label{fig:2}
\end{figure}

Table
\ref{tabCompare}
compares two multiplicity-corrected sensitivity analyses for the data depicted
in Figure 2. \ Method B applies a Bonferroni correction to Wilcoxon's signed
rank test using the sensitivity analysis that is discussed in Rosenbaum (1987;
2002, \S 4; 2011) and that is reviewed briefly in \S \ref{ssSensitivityReview}%
. \ Specifically, method B performs $2\times46$ one-sided sensitivity
analyses, one in each tail for $K=46$ outcomes, and obtains an upper bound on
each one-sided $P$-value in the presence of a bias of at most $\Gamma$.
\ Here, $\Gamma\geq1$ measures the magnitude of the departure from random
assignment within pairs, such that two matched people might differ in their
odds of treatment by a factor of $\Gamma$ due to some unmatched covariate; see
\S \ref{ssSensitivityReview}. \ The B method takes as the adjusted $P$-value
for a biomarker the smaller of 1 and $2\times K=2\times46$ times the smaller
of the two $P$-value bounds for that biomarker. \ For $\Gamma=1$, this is
simply the Bonferroni correction applied to $K=46$ two-sided Wilcoxon tests,
and for $\left(  \Gamma=1,\,K=1\right)  $ it is simply a two-sided Wilcoxon
test. \ For biomarkers $k=15$, 18, 21, and 23, the corrected $P$-values are
below 0.05 in a randomization test with $\Gamma=1$. \ Biomarker $k=15$ is
sensitive to a nontrivial bias of $\Gamma=1.25$, but biomarkers $k=18$, 21,
and 23 are not. \ Biomarker $k=21$ becomes sensitive at $\Gamma=1.76$ with a
corrected $P$-value bound of 0.054. \ Biomarkers $k=18$ and 23 are sensitive
to a bias of $\Gamma=9$ with $P$-value bounds of 0.095 and 0.075,
respectively, but they are insensitive to a bias of $\Gamma=8$ with $P$-value
bounds of 0.030 and 0.023 (not shown in Table
\ref{tabCompare}%
). \ Perhaps unsurprisingly, the noticeable differences between high and low
consumers of fish are high levels of mercury in the blood of high consumers.

When used in sensitivity analyses, the two-sided Bonferroni method is somewhat
conservative. \ A brief explanation of this follows, but the remainder of this
paragraph is not essential to the paper and may be skipped. \ It is correct
but conservative to say: If the bias in treatment assignment is at most
$\Gamma$, and if $K$ hypotheses are tested, then the probability is at most
$\alpha$ that at least one true null hypothesis yields a one-sided $P$-value
bound less than or equal to $\alpha/\left(  2K\right)  $; see Rosenbaum and
Silber (2009a, \S 4.5). \ The Bonferroni method is conservative for three
reasons. \ First, of course, the Bonferroni inequality is an inequality, not
an equality. \ Second, the one-sided $P$-value bound is obtained by maximizing
the probability in one tail, thereby depleting the opposite tail, so doubling
the one-sided $P$-value bound is conservative for a two-sided test when
$\Gamma>1$. \ Third, when $\Gamma>1$, the worst pattern of biases for one
outcome is often not the worst pattern for other outcomes, and the Bonferroni
method ignores this. \ The Bonferroni-Holm method is least conservative when
different outcomes have effects of very different magnitudes, and it is most
conservative when many weakly correlated outcomes have effects of nearly the
same size. \ Fogarty and Small (2016) propose an optimization technique that
eliminates this third source of conservatism. \ An additional consideration,
not strictly a form of conservatism, is that it may not be best to think about
$K=46$ individual outcomes; rather, there may be greatest insensitivity to
bias for some linear combination of the $K$ outcomes; see Rosenbaum (2016).

The second method in Table
\ref{tabCompare}
is cross-screening (CS), and it is defined in \S \ref{secCrossScreen}.
\ Notably in Table
\ref{tabCompare}%
, cross-screening judges outcomes $k=18$ and 23 to be insensitive to a bias of
$\Gamma=9$ while method B judges them sensitive to $\Gamma=9$; indeed, method
CS judges these two outcomes insensitive to a bias of $\Gamma=11$. \ As
discussed later, we expect cross-screening to lose to method B in a
randomization test, $\Gamma=1$, or when $I/K$ is large. \ We offer reasons to
expect cross-screening to perform well, as it does in Table
\ref{tabCompare}%
, when allowance is made for a nontrivial bias, $\Gamma\geq1.25$, and $I/K$ is
not large.

\section{What is random cross-screening?}

\label{secCrossScreen}

\subsection{Definition and control of the family-wise error rate}

\label{ssCrossScreenDescribe}

Random cross-screening begins by splitting the data set in half at random,
yielding two groups of $I/2$ pairs, or $234/2=117$ pairs in Figure 2. \ The
unusual aspect of cross-screening is that both halves are used to screen and
both halves are used to test. \ Indeed, each half suggests a way to test in
the other half, suggesting a one-tailed test in a particular direction, and
guiding the choice of test statistic. \ The method strongly controls the
family-wise error rate, so that the probability that it falsely rejects at
least one true null hypothesis is at most $\alpha$ providing the bias in
treatment assignment is at most $\Gamma$. \

Cross-screening may provide a still stronger conclusion in which the two
halves replicate each other, in the sense described by Bogomolov and Heller
(2013). \ Indeed, cross-screening is a modification, at a crucial step, of
their replicability method, but refers to random halves of a single study
rather than to two independent studies by different investigators. \ We first
define random cross-screening in \S \ref{ssCrossScreenDescribe}, discuss
aspects of implementation in \S \ref{ssCrossScreenApects}, and illustrate it
in the example in \S \ref{ssCrossScreenExample}, then discuss its connection
to replicability in \S \ref{ssCrossScreenReplicability}.\ \ Although
intuitively pleasing, the technical value of replication with random halves is
not obvious; however, in \S \ref{secNonrandomCS}, we discuss nonrandom
cross-screening, and in this case a single investigator may attach importance
to replication of two parts of one investigation.

There are $K$ null hypotheses, $H_{k}$, $k=1,\ldots,K$. \ In Figure 2, there
are $K=46$ null hypotheses, the $k$th hypothesis asserting no effect of
treatment on the $k$th biomarker. \ Fix $\alpha$ with $0<\alpha<1$;
conventionally, $\alpha=0.05$. \ If $\mathcal{S}$ is a finite set, write
$\left\vert \mathcal{S}\right\vert $ for the number of elements in
$\mathcal{S}$. \

A mild premise of our discussion of sample splitting is that the $I$ units in
the sample are independent of each other, and that the $K$ hypotheses refer to
either a finite or infinite population containing these $I$ units. \ At the
risk of belaboring this premise, it is useful to mention a few particulars to
avoid any possibility of misunderstanding. \ In \S \ref{ssIntroExample}, there
are $I=234$ matched pairs, distinct pairs are assumed independent, and the
hypotheses refer either to the $I=234$ pairs as a finite population or to a
finite or infinite population from which they were drawn. \ If the $I$ units
are clustered, so that units in distinct clusters are independent, then the
clusters, not the $I$ units, should be split at random into two groups of
clusters, keeping individual clusters intact in one group or the other.
\ Cross-screening is not applicable if the $I$ units are from a single time
series of length $I$. \ Also, the hypotheses are about the population, whether
finite or infinite, in the sense that each $H_{k}$ is either true or false as
a description of the population. \ For example, if $H_{3}$ asserted that at
least one person in the finite population of $I=234$ pairs of people is over
100 years old, then $H_{3}$ is true if at least one person in the $I=234$
pairs of people is over 100 years old; otherwise, it is false. \ When we split
the sample in half at random, $H_{3}$ continues to refer to the $I=234$ pairs
of people; that is, it does not become two new hypotheses, $H_{3}^{^{\prime}}$
and $H_{3}^{^{\prime\prime}}$, about two randomly defined groups of
$234/2=117$ pairs of people, where $H_{3}^{^{\prime}}$ might be true and
$H_{3}^{^{\prime\prime}}$ might be false, depending upon how the random split
came out.

Random cross-screening is defined by the following four steps. \ Essentially,
each random half of the data is extensively used to plan the analysis of the
other half, and then the two separate analyses are combined by correcting for
doing two analyses.

\begin{description}
\item[Step \ 1: ] Split the sample in half at random.

\item[Step 2: ] Use the first half in any way at all to select hypotheses
$\mathcal{H}_{2}\subseteq\left\{  1,\ldots,K\right\}  $ to test using the
second half, with $1\leq K_{2}=\left\vert \mathcal{H}_{2}\right\vert \leq K$.
\ At the same time, use the first half in any way at all to select a test
statistic $T_{2k}$ to use in the second half in testing $H_{k}$ for
$k\in\mathcal{H}_{2}$. \ In the second half, use some method for testing the
hypotheses in $\mathcal{H}_{2}$ that would strongly control the family-wise
error rate at $\alpha/2$ if only the $K_{2}$ hypotheses in $\mathcal{H}_{2}$
were tested using $T_{2k}$ for $k\in\mathcal{H}_{2}$. \ Let $\mathcal{R}%
_{2}\subseteq\mathcal{H}_{2}$ be the set of rejected hypotheses.

\item[Step 3: ] Use the second half in any way at all to select hypotheses
$\mathcal{H}_{1}\subseteq\left\{  1,\ldots,K\right\}  $ to test using the
first half, with $1\leq K_{1}=\left\vert \mathcal{H}_{1}\right\vert \leq K$.
At the same time, use the second half in any way at all to select a test
statistic $T_{1k}$ to use in the second half in testing $H_{k}$ for
$k\in\mathcal{H}_{1}$. \ In the first half, use some method for testing the
hypotheses in $\mathcal{H}_{1}$ that would strongly control the family-wise
error rate at $\alpha/2$ if only the $K_{1}$ hypotheses in $\mathcal{H}_{1}$
were tested using $T_{1k}$ for $k\in\mathcal{H}_{1}$. \ Let $\mathcal{R}%
_{1}\subseteq\mathcal{H}_{1}$ be the set of rejected hypotheses.

\item[Step 4: ] Reject $H_{k}$ if $k\in\mathcal{R}=\mathcal{R}_{1}%
\cup\mathcal{R}_{2}$.
\end{description}

It is not novel to split a sample at random into two independent subsamples,
plan the study using the first subsample, carry out the plan with the
independent second subsample,\ viewing decisions from the first subsample as
fixed; see, for instance, Cox (1975) or Heller et al. (2009). \ That is, Step
2 is not novel on its own, nor is Step 3 on its own. \ The novel element in
cross-screening is that this process is done twice, with each subsample
playing both roles, and if either version rejects $H_{k}$ then Step 4 rejects
$H_{k}$. \ In a limited sense, cross-screening uses all of the data to plan
and all of the data to test.

The basic property of cross-screening is that it strongly controls the
family-wise error rate, as discussed in the following proposition. \ Strong
control of the family-wise error rate means that the probability of falsely
rejecting at least one true null hypothesis is at most $\alpha$ no matter
which of the $K$ hypotheses are true. \ Let $\mathcal{T}\subseteq\left\{
1,\ldots,K\right\}  $ be the unknown, possibly empty, set of indices such that
$H_{k}$ is true if and only if $k\in\mathcal{T}$.

\begin{proposition}
\label{PropStrongControl} At least one true null hypothesis, $H_{k}$ with
$k\in\mathcal{T}$, is falsely rejected by cross-screening, with $k\in
\mathcal{R}$, with probability $\Pr\left(  \mathcal{T}\cap\mathcal{R}%
\neq\emptyset\right)  \leq\alpha$.
\end{proposition}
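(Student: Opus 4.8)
The plan is to decompose the overall false-rejection event into the contributions of the two half-sample analyses and to bound each separately by $\alpha/2$, combining them with the Bonferroni inequality. First I would use Step 4 to write $\mathcal{R}=\mathcal{R}_{1}\cup\mathcal{R}_{2}$, so that
\[
\{\mathcal{T}\cap\mathcal{R}\neq\emptyset\}=\{\mathcal{T}\cap\mathcal{R}_{1}\neq\emptyset\}\cup\{\mathcal{T}\cap\mathcal{R}_{2}\neq\emptyset\},
\]
and then apply the union bound to obtain
\[
\Pr(\mathcal{T}\cap\mathcal{R}\neq\emptyset)\leq\Pr(\mathcal{T}\cap\mathcal{R}_{1}\neq\emptyset)+\Pr(\mathcal{T}\cap\mathcal{R}_{2}\neq\emptyset).
\]
It then suffices to show each summand is at most $\alpha/2$, and by the symmetry between Steps 2 and 3 it is enough to treat one of them, say the second.

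The crux is to exploit the independence produced by the random split in Step 1. The selected family $\mathcal{H}_{2}$ and the statistics $T_{2k}$ are functions of the first half alone, whereas the test that produces $\mathcal{R}_{2}$ acts on the second half, which is independent of the first. I would therefore condition on the entire first half. Given the first half, both $\mathcal{H}_{2}$ and the $T_{2k}$ are fixed, non-random inputs, so the procedure of Step 2 reduces to a fixed testing method applied to the fixed family $\mathcal{H}_{2}$ with the fixed statistics $T_{2k}$, which by hypothesis strongly controls the family-wise error rate at level $\alpha/2$ when only the $K_{2}$ hypotheses in $\mathcal{H}_{2}$ are tested. The key point, stressed in the $H_{3}$ discussion, is that membership in $\mathcal{T}$ is a property of the whole population and does not change under the split, so the true nulls inside $\mathcal{H}_{2}$ are exactly $\mathcal{T}\cap\mathcal{H}_{2}$. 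Strong control at $\alpha/2$ thus yields, conditionally on the first half,
\[
\Pr(\mathcal{T}\cap\mathcal{R}_{2}\neq\emptyset\mid\text{first half})\leq\alpha/2,
\]
uniformly over every realization of the first half. Taking expectations over the first half preserves the bound, giving $\Pr(\mathcal{T}\cap\mathcal{R}_{2}\neq\emptyset)\leq\alpha/2$; the mirror-image argument gives $\Pr(\mathcal{T}\cap\mathcal{R}_{1}\neq\emptyset)\leq\alpha/2$, and substituting into the Bonferroni bound finishes the proof.

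The step I expect to require the most care is the conditioning argument that converts the random, data-driven choices $\mathcal{H}_{2}$ and $T_{2k}$ into fixed inputs for which the assumed strong-control property applies: one must check that the conditional bound $\alpha/2$ holds for \emph{every} value of the first half, not merely on average, so that it survives integration, and that the second-half procedure depends on the first half only through $\mathcal{H}_{2}$ and the $T_{2k}$, with no further leakage across the split. In the sensitivity-analysis setting the probabilities are taken under the worst-case $\Gamma$-biased assignment, but the same reasoning goes through, since the union bound and the conditioning each survive a supremum over the bias parameters and the random split keeps the two halves independent for every fixed value of the bias. Once the independence from Step 1 and the population-level definition of $\mathcal{T}$ are in hand, the remaining manipulations are just the union bound and the law of total probability.
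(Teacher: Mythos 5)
Your proposal is correct and follows essentially the same route as the paper's proof: condition on the first half sample so that $\mathcal{H}_{2}$ and the $T_{2k}$ become fixed, invoke the assumed strong control at level $\alpha/2$ on the independent second half to get $\Pr\left(\mathcal{T}\cap\mathcal{R}_{2}\neq\emptyset\,\middle\vert\,\mathcal{A}_{1}\right)\leq\alpha/2$, take expectations, argue symmetrically for $\mathcal{R}_{1}$, and combine via the Bonferroni inequality. Your added remarks on the population-level meaning of $\mathcal{T}$ and the uniformity of the conditional bound are sound elaborations of the same argument, not a different one.
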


\begin{proof}
Write $\mathcal{A}_{1}$ for all of the observed data in the first half sample,
and $\mathcal{A}_{2}$ for all of the observed data in the second half sample.
\ From Step 2 of cross-screening, $\Pr\left(  \left.  \mathcal{T}%
\cap\mathcal{R}_{2}\neq\emptyset\,\right\vert \,\mathcal{A}_{1}\right)
\leq\alpha/2$, so that $\Pr\left(  \mathcal{T}\cap\mathcal{R}_{2}\neq
\emptyset\right)  =\mathrm{E}\left\{  \Pr\left(  \left.  \mathcal{T}%
\cap\mathcal{R}_{2}\neq\emptyset\,\right\vert \,\mathcal{A}_{1}\right)
\right\}  \leq\alpha/2$. \ In parallel, from Step 3, $\Pr\left(
\mathcal{T}\cap\mathcal{R}_{1}\neq\emptyset\right)  \leq\alpha/2$. \ By the
Bonferroni inequality,
\[
\Pr\left(  \mathcal{T}\cap\mathcal{R}\neq\emptyset\right)  =\Pr\left\{
\left(  \mathcal{T}\cap\mathcal{R}_{1}\neq\emptyset\right)  \;\mathrm{or}%
\;\left(  \mathcal{T}\cap\mathcal{R}_{2}\neq\emptyset\right)  \right\}
\]%
\begin{equation}
\leq\Pr\left(  \mathcal{T}\cap\mathcal{R}_{1}\neq\emptyset\right)  +\Pr\left(
\mathcal{T}\cap\mathcal{R}_{2}\neq\emptyset\right)  \leq\frac{\alpha}{2}%
+\frac{\alpha}{2}=\alpha\text{.} \label{eqPropStrongInequality}%
\end{equation}

\end{proof}

Could use of the Bonferroni inequality in (\ref{eqPropStrongInequality}) be
replaced by an appeal to the independence of the two half-samples, so that
$\Pr\left(  \mathcal{T}\cap\mathcal{R}\neq\emptyset\right)  =1-\left(
1-\alpha\right)  ^{2}$ instead of $\Pr\left(  \mathcal{T}\cap\mathcal{R}%
\neq\emptyset\right)  \leq\alpha$? \ It cannot. \ Although the two half
samples are independent of each other, the event $\left(  \mathcal{T}%
\cap\mathcal{R}_{1}\neq\emptyset\right)  $ depends upon both halves, as does
the event $\left(  \mathcal{T}\cap\mathcal{R}_{2}\neq\emptyset\right)  $, as
seen in Steps 2 and 3 of the description of cross-screening. \ The choices
about which hypotheses to test in the second sample, $\mathcal{H}_{2}$, and
the choice of test statistics, $T_{2k}$, to use when testing hypotheses in
$\mathcal{H}_{2}$, were based on the first sample. \ However, conditionally
given those choices, the chance of at least one false rejection in
$\mathcal{R}_{2}$ is at most $\alpha/2$, because the two halves are
independent. \ The situation is parallel for $\mathcal{R}_{1}$.

\subsection{Aspects of implementing cross-screening}

\label{ssCrossScreenApects}

In Step 2 of cross-screening, we would typically uses the first half sample to
select a small number $K_{2}=\left\vert \mathcal{H}_{2}\right\vert $ of
hypotheses, perhaps even $K_{2}=1$, that appear most likely to be rejected
when tested in the second half sample. \ Also, we would pick test statistics,
$T_{2k}$ for $k\in\mathcal{H}_{2}$, that are mostly likely to reject those
hypotheses. \ A simple way to do this is to test every hypothesis
$k\in\left\{  1,\ldots,K\right\}  $ using several test statistics in the first
half sample, and then pick for use in the second sample the hypotheses and
test statistics with the smallest $P$-values in the first sample. \ Analogous
considerations apply to Step 3 of cross-screening.

In particular, cross-screening permits a few one-sided tests to be performed,
rather than many two-sided tests. \ Shaffer (1974) and Cox (1977, \S 4.2)
discuss reasons for viewing a two-sided test of $H_{k}$ as two one-sided tests
of $H_{k}$ with a Bonferroni correction for testing $H_{k}$ twice, say by
rejecting $H_{k}$ if either $T_{2k}$ or $-T_{2k}$ is large. \ In
cross-screening, we might decide in Step 2 to use either $T_{2k}$ or $-T_{2k}%
$, not both, as our test statistic in the second half based on what we
observed in the first half, thereby omitting the 2-fold Bonferroni correction
needed for a two-sided test. \ In \S \ref{ssIntroExample}, if the first half
sample suggested that eating lots of fish increased a particular biomarker
with a long-tailed distribution, we might select that biomarker for testing in
the second sample, using a one-sided robust test, looking only for an increase
in that biomarker in the second sample.

Suppose that we always picked a single hypothesis to test, so that
$1=K_{1}=\left\vert \mathcal{H}_{1}\right\vert =K_{2}=\left\vert
\mathcal{H}_{2}\right\vert $. \ Then the two halves might select two different
hypotheses, $\mathcal{H}_{1}\neq\mathcal{H}_{2}$, and either or both
hypotheses might be rejected in Step 4 of cross-screening.

A reasonable strategy is to insist that $K_{1}$ and $K_{2}$ each be at least 1
but much less than $K$. \ There is little or no hope of outperforming the
Bonferroni correction for $K$ hypotheses if $K_{1}$ and $K_{2}$ are near $K$.
\ Even if no hypothesis looks especially promising based on the first half
sample, the power of the overall procedure can only be hurt by taking
$0=K_{2}=\left\vert \mathcal{H}_{2}\right\vert $. \ In Table
\ref{tabCompare}%
, we set $K_{1}=K_{2}=2$, selecting the two least sensitive hypotheses in one
half-sample for testing in the other half-sample, using the Bonferroni
inequality in Steps 2 and 3 to strongly control the family-wise error rate
when testing $K_{1}=K_{2}=2$ hypotheses. \ As it turned out, both half samples
selected hypotheses $H_{18}$ and $H_{23}$ for testing in the complementary
half; that is, $\mathcal{H}_{1}=\mathcal{H}_{2}=\left\{  18,\,23\right\}  $.
\ One could instead use Holm's (1979) method for two hypotheses in Steps 2 and
3, thereby gaining in power.

A better approach does not select hypotheses, but rather orders them, and
tests them in sequence by a method that controls the family-wise error rate.
\ We recommend ordering hypotheses so that the least sensitive hypothesis
appears first. \ In the planning half, one performs a sensitivity analysis for
each outcome, determines the $\Gamma$ at which this outcome becomes sensitive
to bias, and orders hypotheses in decreasing order of sensitivity to bias.
\ This approach avoids an arbitrary decision about how many hypotheses to
test, but it emphasizes the most promising hypotheses. \ For methods that
control the family-wise error rate at $\alpha$ when testing hypotheses in a
given order, see, for instance:\ Gansky and Koch (1996), Hsu and Berger
(1999), Wiens (2003), Hommel and Kropf (2005), Rosenbaum (2008) and Burman,
Sonesson and Guilbaud (2009). \ The simplest approach --- so-called fixed
sequence testing --- tests the first hypothesis at level $\alpha$, stops
testing if this hypothesis is accepted, otherwise tests the second hypothesis
at level $\alpha$, continuing until the first acceptance; see, for instance,
Gansky and Koch (1996). \ Wiens (2003) and Hommel and Kropf (2005) test at a
level below $\alpha$, perhaps at $\alpha/2$, so they can continue beyond the
first acceptance; moreover, they transfer unspent $\alpha$ forward to later
hypotheses. \ Wiens calls this a fall-back procedure. \ Burman et al. (2009)
extend that strategy, recycling some unspent $\alpha$ backwards to previously
accepted hypotheses. \ Our simulation in \S \ref{ssSimulation} evaluates these methods.

The hypotheses $\mathcal{H}_{2}$ and test statistics $T_{2k}$ picked in Step 2
must be a function of the first half sample, without input from the second
half sample. \ Otherwise, treating $\mathcal{H}_{2}$ and $T_{2k}$ as fixed
when testing in the second sample would not be the same as conditioning on
their observed values, a key part of Proposition \ref{PropStrongControl}. \ In
parallel, the hypotheses $\mathcal{H}_{1}$ and test statistics $T_{1k}$ picked
in Step 3 must be a function of the second half sample, without input from the
first half sample. \ The most natural, the most convenient, and the most
public way to do this is with an explicit function or algorithm that makes
these choices, and we recommend that approach in practice. \ Notice, however,
that if, without communicating, one coauthor informally used the first half
sample to pick $\mathcal{H}_{2}$ and test statistics $T_{2k}$, while a second
coauthor informally used the second half sample to pick $\mathcal{H}_{1}$ and
test statistics $T_{1k}$, then Proposition \ref{PropStrongControl} would
continue to hold.

\subsection{Cross-screening in the example}

\label{ssCrossScreenExample}

Consider blood mercury (LBXTHG or $k=18$) in Table
\ref{tabCompare}%
. \ In a conventional randomization test, $\Gamma=1$, all of the $P$-values
for $H_{18}$ are extremely small and both B and CS reject the hypothesis of no
effect. \ At $\Gamma=9$, the one-sided Wilcoxon test using all of the data has
$P$-value bound 0.001036 and multiplying this by $2\times46$ yields 0.095, so
method B does not reject in row 18 of in Table
\ref{tabCompare}%
.

The illustration of cross-screening in Table
\ref{tabCompare}
selects the $K_{1}=K_{2}=2$ least sensitive hypotheses in each half-sample for
testing in the complementary half sample. \ As it turns out, both half-samples
selected hypotheses $H_{18}$ and $H_{23}$ for testing, both referring to
mercury levels in the blood. \ Later, in the simulation in
\S \ref{ssSimulation}, we consider methods that do not fix the number of
hypotheses to be tested.

In cross-screening, one-sided tests are performed, with the direction of the
alternative being determined by the complementary half sample. \ Additionally,
two test statistics were tried in one-half sample, and one was selected for
testing in the complementary half. \ One test was Wilcoxon's signed rank test;
the other was a U-statistic called (8,5,8) proposed in Rosenbaum (2011) and
discussed further in \S \ref{secNotationReview}. \ For hypothesis $H_{18}$,
one-sided $P$-value bounds from Wilcoxon's statistic at $\Gamma=9$ are,
respectively, 0.03445 and 0.00647 in the first and second half samples. \ The
one-sided $P$-value bounds from the U-statistic (8,5,8) are, respectively,
0.02132 and 0.00383 in the first and second half samples. \ The quoted
$P$-value bound for $k=18$, $\Gamma=9$ in Table
\ref{tabCompare}
is four times the smallest of these, $0.015=4\times0.00383$. \ This is
because, in Step 2, the first sample suggested testing $H_{18}$ and $H_{23}$
in the second sample. Moreover, in Step 2, the first sample suggested using
the U-statistic in the second sample because $0.03445>0.02132$. \ In the
second sample, these two hypotheses were tested, and Step 2 requires that the
family-wise error in these two tests be controlled at $\alpha/2$, so each test
was performed at level $\alpha/4=\left(  \alpha/2\right)  /2$. \ The smallest
level $\alpha$ that leads to rejection is $0.015=4\times0.00383$, yielding the
quoted $P$-value from $\mathcal{R}_{2}$. \ It is beside the point, but
nonetheless interesting that the second half sample would also have selected
hypotheses $H_{18}$ and $H_{23}$, and would have recommended testing $H_{18}$
with the U-statistic, and it would have yielded a $P$-value bound of
$0.0259=4\times0.00647$ from $\mathcal{R}_{1}$. \ So, at $\Gamma=9$, the two
halves concur in rejecting $H_{18}$ at the 0.05 level, but we only needed
rejection in either half in Step 4. \ In contrast, at $\Gamma=11$, both half
samples recommend testing $H_{18}$ and $H_{23}$ using the U-statistic; however
the $P$-value bounds for $H_{18}$ are 0.04589 and 0.00865 in the two halves,
so that only the second half sample, $\mathcal{R}_{2}$, leads to rejection
with $0.05>0.035=4\times0.00865$. \

This $P$-value bound of 0.035 from CS should be compared with the $P$-value
bound of 0.505 from the Bonferroni method at $\Gamma=11$. \ The Bonferroni
method corrects for $K=46$ two-sided tests, while cross-screening corrects for
4 one-sided tests, two in each of two half samples. \ Additionally,
cross-screening adaptively selected for use in one half sample the test
statistic that performed best in the other half sample.

\subsection{Cross-screening and replicability}

\label{ssCrossScreenReplicability}

With a different goal in view, Bogomolov and Heller (2013) used a procedure
similar in form to Steps 1 to 3 of cross-screening, but with a different Step
4, and with some other differences. \ They asked whether two independent
studies by different investigators testing the same $K$ hypotheses have
replicated each other. \ Their method says that rejection of $H_{k}$ has
replicated if $k\in\mathcal{R}_{1}\cap\mathcal{R}_{2}$, whereas Step 4 rejects
$H_{k}$ in one study if $k\in\mathcal{R}_{1}\cup\mathcal{R}_{2}$; that is,
both studies must reject $H_{k}$ to replicate, rather than either half
rejecting $H_{k}$ to control the family-wise error rate in a single study.
\ Obviously, $\left(  \mathcal{R}_{1}\cap\mathcal{R}_{2}\right)
\subseteq\left(  \mathcal{R}_{1}\cup\mathcal{R}_{2}\right)  $ so the
probability of replication is lower than the probability of rejection in
cross-screening. \ In the numerical example in \S \ref{ssCrossScreenExample},
hypothesis $k=18\in$ $\mathcal{R}_{1}\cap\mathcal{R}_{2}$ for $\Gamma=9$,
thereby meeting Bogomolov and Heller (2013)'s standard for replication; at
$\Gamma=11$, however, $18\notin$ $\mathcal{R}_{1}\cap\mathcal{R}_{2}$ but
$18\in\mathcal{R}_{1}\cup\mathcal{R}_{2}$, so $H_{18}$ is rejected by
cross-screening without meeting the standard for replication.

As Bogomolov and Heller (2013) emphasize, when asking about replicability of
two independent studies, it is possible that $H_{k}$ is genuinely true in one
study and genuinely false in the other: there might be a treatment effect for
outcome $k$ in Cleveland and not in Kyoto, or a specific gene variant might
have an effect in a study done on one ethnic group, but not in a study done on
a different ethnic group. \ In contrast, with a random split of a random
sample from a single population, $H_{k}$ is either true in the population or
false in the population. \ Additionally, Steps 2 and 3 of cross-screening
treat the two half samples symmetrically, each informing the analysis of the
other, for instance the choice of test statistic, or the choice of tail for a
two-sided test. \ With two independent studies by different investigators, it
is likely that one study came first, so symmetry is unlikely. \ Perhaps the
earlier study influenced the design and analysis of the later study, but the
symmetry of Steps 2 and 3 is not expected when one study follows another.

Section \ref{secNonrandomCS} discusses an intermediate case involving
nonrandom cross-screening in a single study. \ With nonrandom cross-screening
in \S \ref{secNonrandomCS}, it is interesting to know both whether
$k\in\mathcal{R}_{1}\cap\mathcal{R}_{2}\ $\ and also whether $k\in
\mathcal{R}_{1}\cup\mathcal{R}_{2}$.

\section{Notation for paired randomized experiments and observational studies}

\label{secNotationReview}

\subsection{Causal inference in paired randomized experiments}

\label{ssRandomizationReview}

Cross-screening, as described in \S \ref{secCrossScreen}, is quite general.
\ It could be used with matched pairs, matched sets with multiple controls,
full matching, unmatched comparisons, with cohort or case-control studies.
\ The simplest case, however, concerns treatment-control matched pairs, as in
\S \ref{ssIntroExample}, and the evaluation in \S \ref{secEvaluation} will be
restricted to this simplest case.

There are $I$ matched pairs, $i=1,\ldots,I$, of two individuals, $j=1,2$, one
treated with $Z_{ij}=1$, the other an untreated control with $Z_{ij}=0$, so
$1=Z_{i1}+Z_{i2}$ for each $i$. \ Individuals were matched for an observed
covariate $x_{ij}$, so that $x_{i1}=x_{i2}$ for each $i$, but there is concern
about an unmeasured covariate $u_{ij}$ for which it is possible that
$u_{i1}\neq u_{i2}$ for many or all pairs $i$. \ Write $\mathbf{Z}=\left(
Z_{11},\ldots,Z_{I2}\right)  ^{T}$ for the $2I$ dimensional vector of
treatment assignments, and let $\mathcal{Z}$ be the set containing the $2^{I}$
possible values $\mathbf{z}$ of $\mathbf{Z}$, so $\mathbf{z}\in\mathcal{Z}$ if
$\mathbf{z}=\left(  z_{11},\ldots,z_{I2}\right)  ^{T}$ with $z_{ij}=1$ or $0$
and $z_{i1}+z_{i2}=1$ for all $i$, $j$. \ Conditioning on the event
$\mathbf{Z}\in\mathcal{Z}$ is abbreviated as conditioning on $\mathcal{Z}$.

There are $K$ responses, $k=1,\ldots,K$. \ The $j$th individual in pair $i$
has two potential outcomes for the $k$th response, namely $r_{Tijk}$ if this
individual is treated with $Z_{ij}=1$ or $r_{Cijk}$ if this individual is
given control with $Z_{ij}=0$, so the $k$th response observed from this
individual is $R_{ijk}=Z_{ij}\,r_{Tijk}+\left(  1-Z_{ij}\right)  \,r_{Cijk}$
and the effect caused by the treatment, namely $\delta_{ijk}=r_{Tijk}%
-r_{Cijk}$, is not observed for any individual; see Neyman (1923) and Rubin
(1974). \ Fisher's (1935) sharp null hypothesis of no treatment effect for
response $k$ asserts $H_{k}:\delta_{ijk}=0$ for all $i$ and $j$. \ Write
$\mathbf{R}_{k}=\left(  R_{11k},\ldots,R_{I2k}\right)  ^{T}$ and
$\mathbf{r}_{Ck}=\left(  r_{C11k},\ldots,r_{CI2k}\right)  ^{T}$, so that
$\mathbf{R}_{k}=\mathbf{r}_{Ck}$ when $H_{k}$ is true. \ For the potential
outcomes and covariates for the $2I$ individuals, write $\mathcal{F}=\left[
\left\{  \left(  r_{Tijk},\,r_{Cijk},\,k=1,\ldots,K\right)  ,\,x_{ij}%
,\,u_{ij}\right\}  ,\,i=1,\ldots,I,\,j=1,2\right]  $.

In a paired, randomized experiment, a fair coin is flipped independently $I$
times to assign treatments within the $I$ pairs, so that $\Pr\left(  \left.
\mathbf{Z}=\mathbf{z\,}\right\vert \,\mathcal{F},\,\mathcal{Z}\right)
=2^{-I}=\left\vert \mathcal{Z}\right\vert ^{-1}$ for each $\mathbf{z}%
\in\mathcal{Z}$. \ A randomization test of $H_{k}$ compares the distribution
of a test statistic, $T_{k}=t_{k}\left(  \mathbf{Z},\mathbf{R}_{k}\right)  $,
to its randomization distribution, $\Pr\left(  \left.  T_{k}\geq
t\mathbf{\,}\right\vert \,\mathcal{F},\,\mathcal{Z}\right)  $, when $H_{k}$ is
true, namely%
\begin{equation}
\Pr\left\{  \left.  t_{k}\left(  \mathbf{Z},\mathbf{R}_{k}\right)  \geq
t\mathbf{\,}\right\vert \,\mathcal{F},\,\mathcal{Z}\right\}  =\Pr\left\{
\left.  t_{k}\left(  \mathbf{Z},\mathbf{r}_{Ck}\right)  \geq t\mathbf{\,}%
\right\vert \,\mathcal{F},\,\mathcal{Z}\right\}  =\frac{\left\vert \left\{
\mathbf{z}\in\mathcal{Z}:t_{k}\left(  \mathbf{z},\mathbf{r}_{Ck}\right)  \geq
t\right\}  \right\vert }{\left\vert \mathcal{Z}\right\vert }\text{,}
\label{eqNullRand}%
\end{equation}
because $\mathbf{R}_{k}=\mathbf{r}_{Ck}$ when $H_{k}$ is true, $\mathbf{r}%
_{Ck}$ is fixed by conditioning on $\mathcal{F}$, and $\Pr\left(  \left.
\mathbf{Z}=\mathbf{z\,}\right\vert \,\mathcal{F},\,\mathcal{Z}\right)
=\left\vert \mathcal{Z}\right\vert ^{-1}$ in a randomized experiment.

A common form of test statistic $T_{k}=t_{k}\left(  \mathbf{Z},\mathbf{R}%
_{k}\right)  $ uses the treated-minus-control matched-pair difference,
$Y_{ik}=\left(  Z_{i1}-Z_{i2}\right)  \left(  R_{i1k}-R_{i2k}\right)  $, which
equals $\pm\left(  r_{Ci1k}-r_{Ci2k}\right)  $ when $H_{k}$ is true. \ The
absolute pair differences, $\left\vert Y_{ik}\right\vert $, are assigned
nonnegative scores $q_{ik}\geq0$ with $q_{ik}=0$ when $\left\vert
Y_{ik}\right\vert =0$. \ Write $\mathrm{sgn}\left(  y\right)  =1$ if $y>0$ and
$\mathrm{sgn}\left(  y\right)  =0$ otherwise. \ The randomization distribution
of statistics of the form $T_{k}=\sum_{i=1}^{I}\,\mathrm{sgn}\left(
Y_{ik}\right)  \;q_{ik}$ yield many familiar randomization tests, including:
(i) Wilcoxon's signed rank test with $q_{ik}$ to be the rank of $\left\vert
Y_{ik}\right\vert $, (ii) the permutational $t$-test with $q_{ik}=\left\vert
Y_{ik}\right\vert /I$, (iii) Maritz (1979)'s randomization distribution for
Huber's $M$-statistic with $q_{ik}=\psi\left(  \left\vert Y_{ik}\right\vert
/\varkappa\right)  $\ where $\varkappa$ is the median $\left\vert
Y_{ik}\right\vert $ and $\psi\left(  \cdot\right)  $ is a monotone increasing
odd function, $\psi\left(  y\right)  =-\psi\left(  -y\right)  $. \ Under
$H_{k}$, the difference $R_{i1k}-R_{i2k}=r_{Ci1k}-r_{Ci2k}$ is fixed by
conditioning on $\mathcal{F}$, so the null distribution (\ref{eqNullRand}) of
$T_{k}$ is the distribution of the sum of $I$ independent random variables
taking the values $q_{ik}$ or 0 with equal probabilities $1/2$ if $\left\vert
Y_{ik}\right\vert >0$ or the value 0 with probability 1 if $\left\vert
Y_{ik}\right\vert =0$.

The U-statistic $\left(  m,\underline{m},\overline{m}\right)  =\left(
8,5,8\right)  $ was used in cross-screening in Table
\ref{tabCompare}%
. \ Let $a_{ik}$ be the rank of $\left\vert Y_{ik}\right\vert $. \ In general,
the U-statistic $\left(  m,\underline{m},\overline{m}\right)  $ has
$q_{ik}=\binom{I}{m}^{-1}\sum_{\ell=\underline{m}}^{\overline{m}}\binom
{a_{ik}-1}{\ell-1}\binom{I-a_{ik}}{m-\ell}$; see Rosenbaum (2011). \ Here,
$\left(  m,\underline{m},\overline{m}\right)  =\left(  1,1,1\right)  $ yields
the sign test, $\left(  2,2,2\right)  $ is nearly the same as Wilcoxon's test,
$\left(  m,\underline{m},\overline{m}\right)  =\left(  m,m,m\right)  $ for
integer $m\geq3$ yields Stephenson's (1981) test, and $\left(  8,5,8\right)  $
is an $S$-shaped transformation of the ranks that diminishes the influence of
$Y_{ik}$ that are close to zero.

\subsection{Sensitivity analysis in observational studies}

\label{ssSensitivityReview}

A simple model for biased treatment assignment in observational studies
asserts that in the population, prior to matching, treatment assignments are
independent with probabilities $\pi_{ij}=\Pr\left(  \left.  Z_{ij}%
=1\mathbf{\,}\right\vert \,\mathcal{F}\right)  $, and two subjects, say $ij$
and $i^{\prime}j^{\prime}$, with the same value of $x$, $x_{ij}=x_{i^{\prime
}j^{\prime}}$, may differ in their odds of treatment by at most a factor of
$\Gamma\geq1$, that is $\Gamma^{-1}\leq\pi_{ij}\left(  1-\pi_{i^{\prime
}j^{\prime}}\right)  /\left\{  \pi_{i^{\prime}j^{\prime}}\left(  1-\pi
_{ij}\right)  \right\}  \leq\Gamma$, and then conditions on $\mathbf{Z}%
\in\mathcal{Z}$. \ Setting $\gamma=\log\left(  \Gamma\right)  \geq0$, this is
equivalent to introducing an unobserved covariate $\mathbf{u}=\left(
u_{11},\ldots,u_{I2}\right)  ^{T}$ with $0\leq u_{ij}\leq1$ such that
$\Pr\left(  \left.  \mathbf{Z}=\mathbf{z\,}\right\vert \,\mathcal{F}%
,\,\mathcal{Z}\right)  =\exp\left(  \gamma\mathbf{u}^{T}\mathbf{z}\right)
/\sum_{\mathbf{b}\in\mathcal{Z}}\exp\left(  \gamma\mathbf{u}^{T}%
\mathbf{b}\right)  $ for $\mathbf{z}\in\mathcal{Z}$, or equivalently
\begin{equation}
\Pr\left(  \left.  \mathbf{Z}=\mathbf{z\,}\right\vert \,\mathcal{F}%
,\,\mathcal{Z}\right)  =\prod\nolimits_{i=1}^{I}\theta_{i}^{z_{i1}}\,\left(
1-\theta_{i}\right)  ^{z_{i2}}\text{ with }\frac{1}{1+\Gamma}\leq\theta
_{i}\leq\frac{\Gamma}{1+\Gamma}\text{ for each }i\text{;} \label{eqSenMod}%
\end{equation}
see Rosenbaum (2002, \S 4) where equivalences are demonstrated by deriving
$u_{ij}$ from $\pi_{ij}$ and conversely. \ Define\ $\overline{\overline{T}%
}_{\Gamma k}$ to be the sum of $I$ independent random variables taking the
value $q_{ik}\geq0$ with probability $\kappa=\Gamma/\left(  1+\Gamma\right)  $
or the value $0$ with probability $1/\left(  1+\Gamma\right)  $, and define
$\overline{T}_{\Gamma k}$ analogously but with the probabilities reversed.
\ Then for $T_{k}=\sum_{i=1}^{I}\,\mathrm{sgn}\left(  Y_{ik}\right)  \,q_{ik}%
$, it is not difficult to show under $H_{k}$ and (\ref{eqSenMod}) that the
null distribution of $T_{k}$ is bounded by two known distributions,
\begin{equation}
\Pr\left(  \left.  \overline{T}_{\Gamma k}\geq t\mathbf{\,}\right\vert
\,\mathcal{F},\,\mathcal{Z}\right)  \leq\Pr\left(  \left.  T_{k}\geq
t\mathbf{\,}\right\vert \,\mathcal{F},\,\mathcal{Z}\right)  \leq\Pr\left(
\left.  \overline{\overline{T}}_{\Gamma k}\geq t\mathbf{\,}\right\vert
\,\mathcal{F},\,\mathcal{Z}\right)  \text{,} \label{eqSenMod2}%
\end{equation}
from which bounds on $P$-values, point estimates and confidence intervals are
obtained. \ In particular, the upper bound on the one-sided $P$-value is
obtained by evaluating $\Pr\left(  \left.  \overline{\overline{T}}_{\Gamma
k}\geq t\mathbf{\,}\right\vert \,\mathcal{F},\,\mathcal{Z}\right)  $ in
(\ref{eqSenMod2}) with $t$ set to the observed value of the test statistic,
$T_{k}$. \ Under mild conditions on the scores, $q_{ik}$, as $I\rightarrow
\infty$, the upper bound in (\ref{eqSenMod2}) can be approximated by%
\begin{equation}
\Pr\left(  \left.  \overline{\overline{T}}_{\Gamma k}\geq t\mathbf{\,}%
\right\vert \,\mathcal{F},\,\mathcal{Z}\right)  \doteq1-\Phi\left(
\frac{t-\kappa\sum_{i=1}^{I}q_{ik}}{\sqrt{\kappa\left(  1-\kappa\right)
\sum_{i=1}^{I}q_{ik}^{2}}}\right)  \label{eqNormalApproxSen}%
\end{equation}
where $\Phi\left(  \cdot\right)  $ is the standard Normal cumulative distribution.

\section{Evaluation of random cross-screening}

\label{secEvaluation}

\subsection{Random cross-screening performs poorly for $\Gamma=1$ and $K=1$}

\label{ssBadGamma1K1}

Cross-screening can perform poorly. \ Consider the simplest case, namely one
outcome, $K=1$, no bias from unmeasured covariates, $\Gamma=1$, where $Y_{i1}%
$, $i=1,\ldots,I$, are independent observations from a Normal distribution
with expectation $\tau$ and variance 1, $Y_{i1}\sim_{\mathrm{iid}}N\left(
\tau,1\right)  $, testing $H_{1}:\tau=0$ against the alternative that $\tau
>0$. \ In this case, the uniformly most powerful test is based on the mean,
$I^{-1}\sum_{i=1}^{I}Y_{i1}$, the Bonferroni method uses this best test with
no correction because $K=1$, and cross-screening must be inferior. \ It is a
mistake to use cross-screening in a randomized experiment with a small number
of null hypotheses.

More generally, we expect random cross-screening to perform poorly in
comparison with the Bonferroni-Holm method when there are few hypotheses,
$K\leq10$, and the randomization test, $\Gamma=1$, or tiny biases,
$\Gamma=1.1$, are of primary interest. \

The situation can be very different when $K$ is large, perhaps $K\geq100$, and
interest is confined to findings that resist a nontrivial bias, say
$\Gamma\geq1.25$. \ A bias of $\Gamma=1.25$ corresponds with an unobserved
covariate $u$ that doubles the odds of treatment, $Z_{i1}-Z_{i2}=1$, and
doubles the odds of a positive difference in outcomes, $Y_{ik}>0$; see
Rosenbaum and Silber (2009b). \ Reducing $K$ is of substantial value only if
$K$ is fairly large.

\subsection{Random cross-screening and design sensitivity}

\label{ssDesignSen}

If there were a treatment effect and no bias from unmeasured covariates in an
observational study, then the investigator would not be able to recognize this
from the observed data, and the best she could hope to say in this
\textit{favorable situation} is that the conclusions are insensitive to
moderate biases as measured by $\Gamma$. \ In typical situations, as the
number of pairs increases, $I\rightarrow\infty$, in this favorable situation,
the degree of sensitivity to bias tends to a limit, called the design
sensitivity, $\widetilde{\Gamma}$, such that the conclusions are eventually
insensitive to all biases with $\Gamma<\widetilde{\Gamma}$ and sensitive to
some biases with $\Gamma>\widetilde{\Gamma}$; see Rosenbaum (2004; 2010, Part
III). \ More precisely, the upper bound on the $P$-value tends to $1$ as
$I\rightarrow\infty$ for $\Gamma>\widetilde{\Gamma}$, and it tends to $0$ for
$\Gamma<\widetilde{\Gamma}$. \ In that sense, the design sensitivity
$\widetilde{\Gamma}$ is the limiting sensitivity to bias as $I\rightarrow
\infty$ in a favorable situation. \ Moreover, the rate at which the $P$-value
tends to 0 as $I\rightarrow\infty$ for $\Gamma<\widetilde{\Gamma}$ is the
Bahadur efficiency of the sensitivity analysis, and it is useful in
characterizing the performance of alternative methods when $\Gamma
<\widetilde{\Gamma}$; see Rosenbaum (2015). \ The Bahadur efficiency drops to
zero as $\Gamma$ increases to $\widetilde{\Gamma}$.

Generally, the design sensitivity $\widetilde{\Gamma}$ and Bahadur efficiency
depend upon the nature of the favorable situation and on the chosen methods of
analysis. \ A poor choice of test statistic often means an exaggerated report
of sensitivity to unmeasured bias, even in large samples. \ Suppose, for
instance, that $Y_{i1}\sim_{\mathrm{iid}}N\left(  \tau,1\right)  $, and
compare Wilcoxon's signed rank statistic and the U-statistic $\left(
8,5,8\right)  $ that was used in cross-screening in Table
\ref{tabCompare}%
. \ If $\tau=1/2$, then $\widetilde{\Gamma}=3.2$ for Wilcoxon's statistic, but
$\widetilde{\Gamma}=4.2$ for the U-statistic $\left(  8,5,8\right)  $, whereas
for $\tau=1$ the design sensitivity is $\widetilde{\Gamma}=11.7$ for
Wilcoxon's statistic, but $\widetilde{\Gamma}=26.3$ for the U-statistic; see
Rosenbaum (2011, Table 3). \ At $\tau=1/2$, the upper bound on the $P$-value
is tending to 1 as $I\rightarrow\infty$ for $\Gamma=3.5>3.2=\widetilde{\Gamma
}$ if Wilcoxon's statistic is used, but it is tending to zero if $\left(
8,5,8\right)  $ is used because $\Gamma=3.5<4.2=\widetilde{\Gamma}$.
\ Moreover, with $\tau=1/2$ and $\Gamma=2$, the $P$-value bound is tending to
zero at a faster rate for $\left(  8,5,8\right)  $ than for Wilcoxon's
statistic, with Bahadur relative efficiency $>1.25$; see Rosenbaum (2015,
Table 2). \ A similar pattern is found for a shift of $\tau$ with logistic or
$t$-distributed errors. \ In a randomization test with Normal or logistic
errors, the Pitman efficiency of $\left(  8,5,8\right)  $ relative to
Wilcoxon's test is 0.97, so a small loss of efficiency in a randomization
test, $\Gamma=1$, translates into substantial gains in a sensitivity analysis,
$\Gamma>1$. \ Although $\left(  8,5,8\right)  $ is consistently slightly
better than Wilcoxon's statistic for $\Gamma\geq2$, other statistics beat
$\left(  8,5,8\right)  $ for particular error distributions with no uniform
winner; see Rosenbaum (2011, 2015). \ Adaptive choice of a test statistic can
improve design sensitivity and Bahadur efficiency; see Berk and Jones (1978)
and Rosenbaum\ (2012, 2015).

The considerations in the previous paragraph point to one advantage of
cross-screening. \ In Table
\ref{tabCompare}%
, cross-screening adaptively picked either Wilcoxon's statistic or $\left(
8,5,8\right)  $ based on their performance in the complementary half sample.
\ As $I\rightarrow\infty$ in a favorable situation, if two statistics have
different design sensitivities, cross-screening will eventually pick the
statistic with the larger design sensitivity and report greater insensitivity
to unmeasured bias than would have resulted with a fixed but mistaken choice
of statistic. \

The Bonferroni procedure can also be used to adaptively select one of two test
statistics, thereby also attaining the better of two design sensitivities.
\ However, the Bonferroni procedure would have to pay for adaptation by
testing at level $\alpha/\left(  4K\right)  $ rather than $\alpha/\left(
2K\right)  $. \ There are less costly ways to obtain adaptive inferences; see
Rosenbaum (2012).

\subsection{Expected $P$-values and test size in sensitivity analyses with
$\Gamma$ slightly too large}

\label{ssExpectedPvalue}

The Bonferroni method in Table
\ref{tabCompare}
made allowance for a 0.05 chance of a false rejection of a true hypothesis in
$2K=96$ one-sided tests. \ Cross-screening eliminated most hypotheses,
focusing on just two of them. \ Was it wise to do this?

The defining feature of the sensitivity analysis (\ref{eqSenMod2}) testing
$H_{k}$ is that if: (i) $H_{k}$ is true and (ii) the bias in treatment
assignment is at most $\Gamma$, then the chance that the $P$-value bound is
less than or equal to $\alpha$ is at most $\alpha$ for each $0\leq\alpha\leq
1$; that is, the $P$-value bound is stochastically larger than the uniform
distribution. \ The Bonferroni method and most if not all methods that use
$P$-values in controlling the family-wise error rate are designed so that they
give the correct result for $P$-values that are uniformly distributed.
\ However, with $\Gamma>1$, it is common to see many $P$-value bounds that
look strictly larger than the uniform distribution. \

For example, in the calculations used to carry out the cross-screening in
Table
\ref{tabCompare}%
, a total of $2\times2\times2\times46=368$ one-sided $P$-values were examined,
looking at 2 tails, with 2 test statistics, in 2 halves of the data, for 46
outcomes. \ At $\Gamma=1$, $40/368=11\%$ of these $P$-values were $\leq0.05$.
\ However, at $\Gamma=1.25$, only 12 $P$-value bounds (\ref{eqSenMod2}) were
$\leq0.05$, and these occurred for 3 hypotheses $H_{18}$, $H_{21}$, and
$H_{23}$, in both half samples, as judged by both statistics, $12=2\times
2\times3$. \ Even at $\Gamma=1.25$, small $P$-value bounds were rare, and they
may all reflect a genuine effect of eating fish on the level of mercury in the blood.

What happened in Table
\ref{tabCompare}
is not unexpected. \ Suppose $H_{k}$ is true and the bias in treatment
assignment is at most $\Gamma^{\prime}$, but the sensitivity analysis is
performed at $\Gamma>\Gamma^{\prime}$. \ In this case, it is quite unlikely
that a true null hypothesis will yield a small $P$-value bound. \ To
demonstrate this, notice that (\ref{eqSenMod2}) implies the statistic $T_{k}$
is stochastically smaller than $\overline{\overline{T}}_{\Gamma^{\prime}k}$
which is stochastically smaller than $\overline{\overline{T}}_{\Gamma k}$, so
$\Pr\left(  \left.  T_{k}\geq t\mathbf{\,}\right\vert \,\mathcal{F}%
,\,\mathcal{Z}\right)  \leq\Pr\left(  \left.  \overline{\overline{T}}%
_{\Gamma^{\prime}k}\geq t\mathbf{\,}\right\vert \,\mathcal{F},\,\mathcal{Z}%
\right)  \leq\Pr\left(  \left.  \overline{\overline{T}}_{\Gamma k}\geq
t\mathbf{\,}\right\vert \,\mathcal{F},\,\mathcal{Z}\right)  $. \ Using
(\ref{eqNormalApproxSen}) twice at $\Gamma$ and $\Gamma^{\prime}$, as
$I\rightarrow\infty$ we approximate the probability that $\overline
{\overline{T}}_{\Gamma^{\prime}k}$ exceeds the upper $\alpha$ critical value
for $\overline{\overline{T}}_{\Gamma k}$ as%

\begin{equation}
1-\Phi\left\{  \frac{\left(  \kappa-\kappa^{\prime}\right)  \sum_{i=1}%
^{I}q_{ik}+\Phi^{-1}\left(  1-\alpha\right)  \sqrt{\kappa\left(
1-\kappa\right)  \sum_{i=1}^{I}q_{ik}^{2}}}{\sqrt{\kappa^{\prime}\left(
1-\kappa^{\prime}\right)  \sum_{i=1}^{I}q_{ik}^{2}}}\right\}  \text{,}
\label{eqSizeBound}%
\end{equation}
where $\kappa^{\prime}=\Gamma^{\prime}/\left(  1+\Gamma^{\prime}\right)  $.
\ In effect, (\ref{eqSizeBound}) is an upper bound on the size of an $\alpha
$-level sensitivity analysis conducted with the sensitivity parameter $\Gamma$
set above the true bias $\Gamma^{\prime}$. \ Alternatively, when $H_{k}$ is
true, we may approximate the expected $P$-value bound, that is, the
expectation of $\Pr\left(  \left.  \overline{\overline{T}}_{\Gamma k}\geq
t\mathbf{\,}\right\vert \,\mathcal{F},\,\mathcal{Z}\right)  $ when $t$ has the
distribution of $\overline{\overline{T}}_{\Gamma^{\prime}k}$, as expression
(3) in Sackrowitz and Samuel-Cahn (1999),%

\begin{equation}
\mathrm{EPV}=\Phi\left[  \frac{\left(  \kappa-\kappa^{\prime}\right)
\sum_{i=1}^{I}q_{ik}}{\sqrt{\left\{  \kappa\left(  1-\kappa\right)
+\kappa^{\prime}\left(  1-\kappa^{\prime}\right)  \right\}  \sum_{i=1}%
^{I}q_{ik}^{2}}}\right]  \text{;} \label{eqEPV}%
\end{equation}
see also Dempster and Schatzoff (1965). \ For Wilcoxon's signed rank statistic
without ties, $\sum_{i=1}^{I}q_{ik}=I\left(  I+1\right)  /2$ and $\sum
_{i=1}^{I}q_{ik}^{2}=I\left(  I+1\right)  \left(  2I+1\right)  /6$. \ Both
(\ref{eqSizeBound}) and (\ref{eqEPV}) use the large sample approximation to
the distributions of $\overline{\overline{T}}_{\Gamma^{\prime}k}$ and
$\overline{\overline{T}}_{\Gamma k}$, so they make no allowance for the
discreteness of permutation distributions.

Table
\ref{tabEPV}
evaluates (\ref{eqSizeBound}) and (\ref{eqEPV}) for a one-sided 0.05-level
Wilcoxon test for $I=100$, 250 and 500 pairs, for several values of
$\Gamma^{\prime}\leq\Gamma$. \ Of course, when $\Gamma^{\prime}=\Gamma$, the
size-bound equals the level-bound, 0.05, and the expected $P$-value is 0.5.
\ However, when $\Gamma^{\prime}<\Gamma$, the size is well below 0.05 and the
expected $P$-value is well above 0.5, a pattern that becomes more noticeable
as $I$ increases. \ If $\Gamma$ is a little too large, small $P$-values for
true hypotheses are improbable.

Table
\ref{tabEPV}
is relevant because it is uncommon to see many outcomes that are equally
sensitive to unmeasured biases; see, for instance, Table
\ref{tabCompare}%
. \ Our tests must allow for this uncommon situation as a logical possibility,
but the Bonferroni method and cross-screening do this in two very different
ways, though both methods strongly control the family-wise error rate. \ The
Bonferroni method pays a price for every $P$-value bound that is computed.
\ Cross-screening with $\Gamma\geq1.25$ does not, in typical situations,
pursue or spend resources on $P$-value bounds that are very large.

\subsection{Trade-off of sample size $I$ and number of hypotheses $K$ when
$\Gamma=1$}

\label{ssTradeoff}

Is it ever worth half the sample to test one hypothesis rather than $K$
hypotheses? \ Table
\ref{tabttest}
is several steps removed from cross-screening, but it provides one simple view
of the trade-off of sample size $I$ and the number of hypotheses under test
$K$ when $\Gamma=1$. \ Consider again the case in which $Y_{i1}\sim
_{\mathrm{iid}}N\left(  \tau,1\right)  $ and $Y_{ik}\sim_{\mathrm{iid}%
}N\left(  0,1\right)  $ for $k=2,\ldots,K$, so $H_{1}$ is false and the other
$H_{k}$ are true. \ Table
\ref{tabttest}
is simply a sample size calculation for the t-test for hypothesis $H_{1}$ in
the presence of $K-1$ other true null hypotheses.

Table
\ref{tabttest}
shows the number of pairs, $I$, required for 80\% power in a single two-sided
0.05-level t-test for $I$ pair differences $Y_{i1}$ that are $Y_{i1}%
\sim_{\mathrm{iid}}N\left(  \tau,1\right)  $ when a Bonferroni correction is
made for testing a total of $K$ hypotheses. \ For instance, with $\tau=0.1$
and $K=1$ hypothesis one needs $I=787$ pairs for 80\% power, but with $K=50$
hypotheses the Bonferroni correction raises that to 1,713 pairs for 80\%
power. \ In Table
\ref{tabttest}%
, sample sizes that are more than twice the sample size for $K=1$ are in
\textbf{bold}.

In Table
\ref{tabttest}%
, for $K=50$, 100, 250 and 500 hypotheses, it would be worth discarding a
random half of the data if that half correctly identified the one false null
hypothesis, thereby eliminated the need to correct for multiple testing in the
remaining half of the data. \ For $K=10$ hypotheses, this is not true, and use
of the Bonferroni correction with the full sample yields 80\% power with fewer pairs.

Table
\ref{tabttest}
is a simple, informal guide, but it is an oversimplification in many ways.
\ First, if several hypotheses were false, not just one, use of the Bonferroni
adjustment might reject several hypotheses. \ Second, cross-screening performs
tests in both half-samples, but Table
\ref{tabttest}
simply discards one half-sample. \ Third, the conventional t-test has size
equal to its level, so the issues in \S \ref{ssExpectedPvalue} and Table
\ref{tabEPV}
that arise with $\Gamma>1$ are not reflected in Table
\ref{tabttest}%
.

\subsection{Stylized asymptotics}

\label{ssAsymptotic}

One attraction of cross-screening is that it is highly flexible. \ For
instance, one half-sample can adaptively suggest the direction and the test
statistic to be used in the other half, and this might be done in a variety of
ways. \ In this section, we greatly restrict, and somewhat distort, the use of
cross-screening in the hope of gleaning some analytical insight into its
behavior relative to a test that does not split the sample in half. \ In
particular, we do not allow cross-screening to adaptively select a test
statistic, a major source of its power. \ Additionally, we assume that only
the first outcome is affected by the treatment and cross-screening selects
$K_{1}=K_{2}=1$ hypothesis to test in each half-sample.

Suppose the sample size, $I$, is even and the pairs are split at random in two
halves of size $I/2$. \ Let $\dot{T}$ and $\ddot{T}$ be the same test
statistic computed for the first outcome from the two halves of the data,
where we reject $H_{1}$ if the test statistic is large. \ Because there is
nothing adaptive here --- the choice of test statistic and the direction of
the test are fixed in advance --- the statistics $\dot{T}$ and $\ddot{T}$ are
independent and have the same distribution, whether under the null hypothesis
or the alternative. \ Suppose that the test statistic we would use if we did
not split the sample was $\dot{T}+\ddot{T}$. \ Suppose, finally, that $\dot
{T}$ and $\ddot{T}$ would each be $N\left(  \theta_{0},\upsilon_{0}\right)  $
under the null hypothesis and would each be $N\left(  \theta,\upsilon\right)
$ under the alternative. \ It is convenient to think of $\dot{T}$ and
$\ddot{T}$ as scaled so that, as $I\rightarrow\infty$, the expectations,
$\theta_{0}$ and $\theta$, are constant, while the variances, $\upsilon_{0}$
and $\upsilon$, are $O\left(  2/I\right)  $.

The situation just described would hold exactly with $\upsilon=\upsilon_{0} $
for the mean of $I/2$ iid Normally distributed pair differences, and it would
hold asymptotically for Wilcoxon's signed rank test and for many other tests
for paired data. \ The signed rank statistic for the whole sample is not quite
the same as the sum of the statistics for the two half samples, but as
$I\rightarrow\infty$ this distinction would become unimportant.
\ Additionally, $\upsilon\neq\upsilon_{0}$ for the signed rank statistic.

The combined test, $\dot{T}+\ddot{T}$ is corrected for multiple testing of $K$
hypotheses in $2$ tails, so let $\varrho_{b}=2K$. \ The combined test rejects
the null if $\left(  \dot{T}+\ddot{T}-2\theta_{0}\right)  /\sqrt{2\upsilon
_{0}}\geq-\Phi^{-1}\left(  \alpha/\varrho_{b}\right)  $, with power
\begin{equation}
\Pr\left\{  \frac{\dot{T}+\ddot{T}-2\theta}{\sqrt{2\upsilon}}\geq
\frac{2\left(  \theta_{0}-\theta\right)  -\Phi^{-1}\left(  \alpha/\varrho
_{b}\right)  \sqrt{2\upsilon_{0}}}{\sqrt{2\upsilon}}\right\}  =1-\Phi\left\{
\frac{\sqrt{2}\left(  \theta_{0}-\theta\right)  -\Phi^{-1}\left(
\alpha/\varrho_{b}\right)  \sqrt{\upsilon_{0}}}{\sqrt{\upsilon}}\right\}
\text{.} \label{eqAsympBonf}%
\end{equation}
Now, consider cross-screening. \ Suppose that the first half-sample recommends
testing hypothesis $k=1$ in the second half if $\left(  \dot{T}-\theta
_{0}\right)  /\sqrt{\upsilon_{0}}\geq-\Phi^{-1}\left(  \alpha\right)  $, and
the second half rejects the hypothesis if $\left(  \ddot{T}-\theta_{0}\right)
/\sqrt{\upsilon_{0}}\geq-\Phi^{-1}\left(  \alpha/\varrho_{s}\right)  $ where
$\varrho_{s}=2$ is the cross-screening correction testing for testing one
hypothesis in two half samples. \ In parallel, the second half recommends
testing if $\left(  \ddot{T}-\theta_{0}\right)  /\sqrt{\upsilon_{0}}\geq
-\Phi^{-1}\left(  \alpha\right)  $ and the first half rejects if $\left(
\dot{T}-\theta_{0}\right)  /\sqrt{\upsilon_{0}}\geq-\Phi^{-1}\left(
\alpha/\varrho_{s}\right)  $. \ So cross-screening rejects hypothesis $k=1$
if
\begin{equation}
\max\left(  \frac{\dot{T}-\theta}{\sqrt{\upsilon}},\,\frac{\ddot{T}-\theta
}{\sqrt{\upsilon}}\right)  \geq\frac{\theta_{0}-\theta-\Phi^{-1}\left(
\alpha/\varrho_{s}\right)  \sqrt{\upsilon_{0}}}{\sqrt{\upsilon}}=\lambda
_{1}\text{, say,} \label{eqAsympCross1}%
\end{equation}
and
\begin{equation}
\min\left(  \frac{\dot{T}-\theta}{\sqrt{\upsilon}},\,\frac{\ddot{T}-\theta
}{\sqrt{\upsilon}}\right)  \geq\frac{\theta_{0}-\theta-\Phi^{-1}\left(
\alpha\right)  \sqrt{\upsilon_{0}}}{\sqrt{\upsilon}}=\lambda_{2}\text{, say.}
\label{eqAympCross2}%
\end{equation}
\qquad

Actually, (\ref{eqAsympCross1}) and (\ref{eqAympCross2}) embody the following
small distortion. \ In principal, it is logically possible, but not probable,
that several of the $K$ hypotheses satisfy (\ref{eqAsympCross1}) and
(\ref{eqAympCross2}). \ This would complicate the situation; however, it is
improbable because only $H_{1}$ is false. \ For (\ref{eqAsympCross1}) and
(\ref{eqAympCross2}) to hold for a second hypothesis, we would need to reject
the same true null hypothesis twice, once in each half-sample. \

It is important to notice that $\sqrt{2}\left(  \theta_{0}-\theta\right)  $
appears on the right of (\ref{eqAsympBonf}), where $\theta_{0}-\theta$ appears
in (\ref{eqAsympCross1}) and (\ref{eqAympCross2}). \ Therefore, if
$I\rightarrow\infty$ with $\varrho_{b}$ and $\varrho_{s}$ fixed --- that is,
effectively with $K$ fixed --- the cross-screening method has little hope of
competing with the Bonferroni method. \ There is, however, a serious
competition if $I$ is fixed as $K$ increases, perhaps with $\varrho_{s}=2$
fixed and $\varrho_{b}=2K$ increasing.

With $\varrho_{s}=2$, we many calculate the probability of the joint event
(\ref{eqAsympCross1}) and (\ref{eqAympCross2}) as
\begin{equation}
2\int\nolimits_{\lambda_{1}}^{\infty}\int\nolimits_{\lambda_{2}}^{y}%
\phi\left(  x\right)  \,\phi\left(  y\right)  \,dx\,dy
\label{eqAsympCrossPower}%
\end{equation}
where $\phi\left(  \cdot\right)  $ is the standard Normal density.

In Table
\ref{tabAsymp}%
, we compare the large-sample power of a nonadaptive cross-screening test of a
one-sided hypothesis with a Bonferroni adjusted test of $K$ two-sided
hypotheses, where the last $K-1$ null hypotheses are true. \ The family-wise
error rate is controlled at $\alpha=0.05$, so cross-screening tests in two
half samples at level $\alpha/2=0.025$, while the Bonferroni procedure does
$2K$ one-sided tests at level $\alpha/\left(  2K\right)  $. Here,
cross-screening is not permitted to use a major source of its power, namely
adaptive testing. \ Table
\ref{tabAsymp}
assumes $\upsilon=\upsilon_{0}$, as for the Normal-mean situation above, and
characterizes power in terms of the noncentrality parameter $\mathrm{ncp}%
=\left(  \theta-\theta_{0}\right)  /\sqrt{\upsilon}$. \

In Table
\ref{tabAsymp}%
, we see that cross-screening is a terrible way to select one tail of a two
tailed test if there is only $K=1$ hypothesis, that cross-screening is
inferior for $K=10$ hypotheses, but that it has higher power than the
Bonferroni procedure for $K=100$, 250 or 500 hypotheses. \ Our sense is that
Table
\ref{tabAsymp}
offers correct qualitative advice: cross-screening is useful when searching
for a few large needles in a very big haystack, and is useless in searching
for a few hay-like needles in a small haystack. \ The simulation in
\S \ref{ssSimulation} will provide further numerical results about power free
of the small distortion noted above.

\subsection{Splitting with a small planning sample}

\label{ssSplitPlanning}

An alternative to cross-screening is single screening, as discussed in various
contexts by Cox (1975), Heller et al. (2009) and Zhang et al. (2011). \ In
single screening, the sample is split at random into two parts, a small
planning sample and a large analysis sample. \ The study is planned using the
planning sample, which is then discarded. \ For example, one might use the
planning sample to order the hypotheses $H_{k}$, determine the side of their
one-sided alternatives, and to select test statistics. \ Then, with the fixed
order, sides and test statistics, the analysis sample might test the
hypotheses in order. \ Importantly, split screening discards the planning
sample while cross-screening uses it in testing, but split screening plans
using a small planning sample while cross-screening uses a half sample.

It only takes a little thought to realize that single screening is better than
cross-screening as the sample size grows, $I\rightarrow\infty$, with the
number of hypotheses $K$ remaining fixed. \ See the cited papers for related
formal calculations. \ The reason is that the planning sample only needs to be
large enough to ensure that sensible plans are made, so that as $I\rightarrow
\infty$ the fraction of the sample needed for planning can diminish and the
loss of power from using a planning sample can diminish as well. \ In an
application with $I=132,786$ matched pairs, Zhang et al. (2011) used a 10\%
planning sample, where the 10\% reduction in sample size more than paid for
itself by improving the plan for analysis of the remaining 90\% of the sample;
moreover, $I/10\approx13,800$ pairs was an adequate sample for planning
purposes. \ It is doubtful that results for $I\rightarrow\infty$ with $K$
fixed are relevant to a situation like \S \ref{ssIntroExample} with $I=234$
pairs and $K=46$ hypotheses. \ Ten percent of $I=234$ pairs is only 23 pairs,
and that may be too small a planning sample to make correct decisions about
$K=46$ hypotheses.

The simulation in \S \ref{ssSimulation} compares cross-screening to split
screening with a 20\% planning sample.

\section{Simulation of a sensitivity analysis testing many null hypotheses}

\label{ssSimulation}

Table
\ref{tabSim}
reports simulated power of a level-0.05 sensitivity analysis conducted with
$\Gamma=2$ with $K=100$ or $K=500$ hypotheses and $I=100$, or 250 or 500
matched pairs. \ Among the $K$ hypotheses, one or two are false. \ These are
situations in which cross-screening is expected to perform well: the number of
hypotheses is large compared with the sample size, most null hypotheses are
true, and a sensitivity analysis is performed with $\Gamma>1$. \ As noted
previously, cross-screening should not be used to test a small number of
hypotheses, $K$.

There are $K$ independent Normal outcomes with variance 1. \ Outcomes $k=1$
and $k=2$ have expectations $\tau_{1}$ and $\tau_{2}$, while $\tau_{k}=0$ for
outcomes $k=3,\ldots,K$. \ That is, hypothesis $H_{1}$ is always false with
$\tau_{1}\neq0$, $H_{2}$ is false whenever $\tau_{2}\neq0$, and $H_{k}$ is
true for $k=3,\ldots,K$. \ A value of $K$, $I$ and $\left(  \tau_{1},\tau
_{2}\right)  $ defines one sampling situation. \ In 10,000 replicates of each
sampling situation, Table
\ref{tabSim}
reports the proportion of rejections of $H_{1}$, of $H_{2}$ and of both
$H_{1}$ and $H_{2}$. \ In each sampling situation, the highest power for each
hypothesis, $H_{1}$, $H_{2}$, $H_{12}$, is in bold.

There are three tests. \ The first is the familiar Wilcoxon signed rank test.
\ The second test is another signed rank statistic, namely the U-statistic
$\left(  8,5,8\right)  $ from Rosenbaum (2011). \ For many error
distributions, the U-statistic $\left(  8,5,8\right)  $ has a larger design
sensitivity than Wilcoxon's statistic. \ Both the Wilcoxon statistic and the
U-statistic $\left(  8,5,8\right)  $ are used as fixed tests. \ The third test
makes an adaptive choice between three U-statistics, namely $\left(
8,5,8\right)  $, $\left(  8,6,7\right)  $, and $\left(  8,7,8\right)  $. \ Of
these three, the U-statistic $\left(  8,7,8\right)  $ has the highest design
sensitivity for short-tailed distributions like the Normal distribution,
$\left(  8,6,7\right)  $ has the highest design sensitivity for long-tailed
distributions like the $t$-distribution with 3 degrees of freedom, and
$\left(  8,5,8\right)  $ is a compromise; see Rosenbaum (2011, Table 3).

Three methods are used to strongly control the family-wise error rate in
two-sided tests. \ The Bonferroni method splits 0.05 among various hypotheses
tests. \ For instance, using Wilcoxon's test, the Bonferroni method splits
0.05 equally among $K$ tests each with two tails, so a rejection occurs if the
one-sided $P$-value bound at $\Gamma=2$ is $\leq0.05/\left(  2K\right)  $.
\ The same approach is used with the U-statistic $\left(  8,5,8\right)  $.
\ For adaptive inference, the Bonferroni method uses all three tests, but
rejects if the smallest $P$-value bound (\ref{eqNormalApproxSen}) is
$\leq0.05/\left(  3\times2\times K\right)  $. \ The Bonferroni method attains
the largest design sensitivity of the three component tests; however, there
are better approaches to adaptive inference (Rosenbaum 2012).

Cross-screening splits the pairs in half at random, plans the analysis in the
first half, tests in the second half, then plans in the second half and tests
in the first half, with a two-fold Bonferroni correction for having done both
analyses. \ Cross-screening does one-tailed tests, not two-tailed tests,
having selected one tail based on the planning sample. \ For adaptive
inference, cross-screening picks one of the U-statistics $\left(
8,5,8\right)  $, $\left(  8,6,7\right)  $, and $\left(  8,7,8\right)  $ based
on the planning sample. \ Cross-screening orders the $K$ hypotheses based on
the planning sample, tests the hypotheses in the analysis sample in the given
order, and stops testing with the first acceptance. \ It is well known that
testing in order strongly controls the family-wise error rate; see, for
instance, Koch and Gansky (1996), Hsu and Berger (1999), Hommel and Kropf
(2005), and Rosenbaum (2008). \ The hypotheses were ordered by determining
their sensitivity to bias in the planning sample, measured by $\Gamma$ at
level $\alpha=0.05$, placing the least sensitive hypotheses first.

Single screening split the pairs at random into a 20\% planning sample and an
80\% analysis sample. \ It discards the 20\% planning sample and does one
analysis based on the 80\% analysis sample, without a correction for having
done two analyses. \ In contrast to cross-screening, split screening: (i)
omits the correction for having done two analyses, (ii) uses a larger 80\%
sample in its one analysis, rather than two 50\% samples, (iii) but discards
20\% of the data. \ Aside from these differences, the procedures for split
screening are the same as for cross-screening. \ Specifically, in
split-screening, the planning sample determines the tail of a one-tailed test,
the order for testing-in-order, and in adaptive inference it chooses the test statistic.

In Table
\ref{tabSim}%
, the highest power in most sampling situations is from cross-screening with
an adaptive choice of test statistic. \ Wilcoxon's test has inferior power at
$\Gamma=2$, consistent with results in Rosenbaum (2011); however,
cross-screening often has higher power than the Bonferroni method when the
Wilcoxon test is used. \ The case of $\left(  \tau_{1},\tau_{2}\right)
=\left(  0.6,0.4\right)  $ for $I=500$ and $K=100$ is especially interesting.
\ Using adaptive methods, both the Bonferroni method and cross-screening
reject $H_{1}$ with power near 1, but cross-screening has much higher power
for $H_{2}$. \ A similar pattern is seen for $\left(  \tau_{1},\tau
_{2}\right)  =\left(  0.6,0.4\right)  $ for $I=250$ and $K=100$ and for
$I=500$ and $K=500$.

We repeated the simulation in Table
\ref{tabSim}
but with samples from a t-distribution on 4 degrees of freedom rather than
from a Normal distribution. \ The comparison of Bonferroni, cross-screening
and single screening was similar to Table
\ref{tabSim}%
, with cross-screening having superior power. \ Unlike Table
\ref{tabSim}%
, adaptive cross-screening was slightly to cross-screening with a fixed choice
of the U-statistic $\left(  8,5,8\right)  $, but this is probably because
$\left(  8,5,8\right)  $ is an excellent choice for the t-distribution with 4
degrees of freedom; see Rosenbaum (2011, Table 4).

The simulation in Table
\ref{tabSim}
used screening to order hypotheses which were then tested in a fixed order,
terminating with the first acceptance. \ We also simulated several alternative
methods. \ Wiens (2003), Hommel and Kropf (2005) and Burman et al. (2009)
proposed fixed sequence testing procedures that test-in-order at a level below
$\alpha$ so that they can continue testing beyond the first acceptance.
\ Wiens (2003) and Hommel and Kropf (2005) transfer forward unspent $\alpha$
to test later hypotheses in the sequence, whereas Burman et al. (2009) also
cycle back to retest early hypotheses with larger $\alpha$ when later
hypotheses are rejected. \ We tried a so-called fall-back procedure, testing
the first hypothesis in order at level $\alpha/2$, and the second hypothesis
at level $\alpha$ if the first hypothesis was rejected or at level $\alpha/2$
if the first hypothesis was not rejected. \ We also tried recycling, meaning
that if the first hypothesis was not rejected at level $\alpha/2$, but the
second hypothesis was rejected at level $\alpha/2$, then the first hypothesis
was retested at level $\alpha$. \ In the situations in Table
\ref{tabSim}%
, there are at most two false null hypotheses, so recycling is logically
better than fall-back, but this need not be true when three null hypotheses
are false. \ Fixed sequence testing, as reported in Table
\ref{tabSim}%
, fall-back and recycling were close competitors for each of the situations in
Table
\ref{tabSim}%
, and there was no consistent winner for all situations. \ As logic would
suggest, fixed sequence testing had a slight advantage when only $H_{1}$ was
false because $\tau_{2}=0$. \ After all, fixed sequence testing bets all of
$\alpha$ on the first hypothesis in the sequence. \ Similarly, fall-back and
recycling had a slight advantage when $\left(  \tau_{1},\tau_{2}\right)
=\left(  0.5,0.5\right)  $. \

Although we evaluated ordered testing procedures, one need not use any form of
ordered testing to use cross-screening. \ Instead, one could use the first
sample to select $K_{1}\ll K$ hypotheses to test in the second sample, and use
the second to select $K_{2}\ll K$ hypotheses to test in the second sample, and
then correct for multiple testing using Holm's (1979) procedure. \ As always
with cross-screening, a two-fold Bonferroni correction is needed to analyze
both half samples.

\section{Nonrandom cross-screening}

\label{secNonrandomCS}

In some circumstances, it would be helpful to show that a treatment has the
same effect in each of two subpopulations. \ Nonrandom cross-screening uses an
observed binary covariate rather than random numbers to split the sample.
\ Nonrandom and random cross-screening each have advantages and disadvantages.

Suppose that a treatment has a large effect in one subpopulation and no effect
in the complementary subpopulation. \ In this case, nonrandom cross-screening
will be unhelpful, because each subpopulation will provide highly misleading
advice about how to analyze the complementary subpopulation.

Different people often receive the same treatment for different reasons. \ One
person eats fish because she lives on the coast of Maine and inexpensive fresh
fish is abundant, while another person eats fish in Arizona believing it to
confer health benefits. \ When treatments are not randomly assigned, the
evidence that the treatment is the cause of its ostensible effects is
strengthened by showing that people who receive the treatment for different
reasons experience similar effects; see Rosenbaum (2001; 2015b, \S 1.6) \ With
issues of this sort in mind, Lund and Bonaa (1993) examined the possible
effects of high fish consumption by comparing the wives of fisherman to
controls of similar socioeconomic status. \ Although the publicly available
NHANES data does not classify people by geography or employment, some other
data set might permit people who consume high levels of fish to be divided
based on whether they eat fish believing it to be health promoting or because
of its availability at low cost. \ In this case, the matched pairs might be
divided not at random but to distinguish two reasons people eat fish. \ The
study might be more convincing if it demonstrated the same or similar
ostensible effects in both types of pairs.

There is one key technical point about nonrandom cross-screening. \ If we view
a study as testing Fisher's null hypothesis that asserts the treatment has no
effect on anyone, then cross-screening may be used as above to test this
hypothesis while strongly controlling the family-wise error rate. \ However,
if both splits lead to rejection, then this achieves Bogomolov-Heller
replicability, thereby rejecting the null hypothesis of no effect for each
subpopulation. \ With random cross-screening, Bogomolov-Heller replicability
does not have a clear interpretation, but with nonrandom cross-screening it
constitutes a distinct strengthening of the study's conclusions.%

\section*{References}%
%

\setlength{\hangindent}{12pt}
\noindent
Berk, R. H. and Jones, D. H. (1978), \textquotedblleft Relatively optimal
combinations of test statistics,\textquotedblright\ \textit{Scandinavian
Journal of Statistics}, 5, 158-162.%

\setlength{\hangindent}{12pt}
\noindent
Bogomolov, M. and Heller, R. (2013), \textquotedblleft Discovering findings
that replicate from a primary study of high dimension to a follow-up
study,\textquotedblright\ \textit{Journal American Statistical Association},
108, 1480-1492.%

\setlength{\hangindent}{12pt}
\noindent
Burman, C. F., Sonesson, C., and Guilbaud, O. (2009), \textquotedblleft A
recycling framework for the construction of Bonferroni-based multiple
tests,\textquotedblright\ \textit{Statistics in Medicine}, 28, 739-761.%

\setlength{\hangindent}{12pt}
\noindent
Cox, D.R. (1975), \textquotedblleft A note on data-splitting for the
evaluation of significance levels,\textquotedblright\ \textit{Biometrika}, 62, 441-444.%

\setlength{\hangindent}{12pt}
\noindent
Cox, D.R. (1977), \textquotedblleft The role of significance tests (with
Discussion),\textquotedblright\ \textit{Scandinavian Journal of Statistics}, 49-70.%

\setlength{\hangindent}{12pt}
\noindent
Dempster, A. P. and Schatzoff, M. (1965), \textquotedblleft Expected
significance level as a sensitivity index for test
statistics,\textquotedblright\ \textit{Journal American Statistical
Association}, 60, 420-436.%

\setlength{\hangindent}{12pt}
\noindent
Fisher, R. A. (1935), \textit{Design of Experiments}, Ediburgh: Oliver and Boyd.%

\setlength{\hangindent}{12pt}
\noindent
Fogarty, C. B. and Small, D. S. (2016), \textquotedblleft Sensitivity analysis
for multiple comparisons in matched observational studies through
quadratically constrained linear programming,\textquotedblright%
\ \textit{Journal of the American Statistical Association}, 111, 1820-1830.%

\setlength{\hangindent}{12pt}
\noindent
Heller, R., Rosenbaum, P.R. and Small, D.S. (2009), \textquotedblleft Split
samples and design sensitivity in observational studies,\textquotedblright%
\ \textit{Journal of the American Statistical Association}, 104, 1090-1101.%

\setlength{\hangindent}{12pt}
\noindent
Holm, S. (1979), \textquotedblleft A simple sequentially rejective multiple
test procedure,\textquotedblright\ \textit{Scandinavian Journal of
Statistics},\textquotedblright\ 6, 65-70.%

\setlength{\hangindent}{12pt}
\noindent
Hommel, G., and Kropf, S. (2005), \textquotedblleft Tests for differentiation
in gene expression using a data-driven order or weights for
hypotheses,\textquotedblright\ \textit{Biometrical Journal}, 47, 554-562.%

\setlength{\hangindent}{12pt}
\noindent
Hsu, J. C. and Berger, R. L. (1999), \textquotedblleft Stepwise confidence
intervals without multiplicity adjustment for dose---response and toxicity
studies,\textquotedblright\ \textit{Journal of the American Statistical
Association}, 94, 468-482.%

\setlength{\hangindent}{12pt}
\noindent
Koch, G. G. and Gansky, S. A. (1996), \textquotedblleft Statistical
considerations for multiplicity in confirmatory protocols,\textquotedblright%
\ \textit{Drug Information Journal}, 30(2), 523-534.%

\setlength{\hangindent}{12pt}
\noindent
Lund, E. and Bonaa, K. H. (1993), \textquotedblleft Reduced breast cancer
mortality among fisherman's wives in Norway,\textquotedblright\ \textit{Cancer
Causes and Control}, 4, 283-287.%

\setlength{\hangindent}{12pt}
\noindent
Neyman, J. (1923, 1990), \textquotedblleft On the application of probability
theory to agricultural experiments,\textquotedblright\ reprinted in English
in: \textit{Statistical Science}, 5, 463-480.%

\setlength{\hangindent}{12pt}
\noindent
Rosenbaum, P.R. (1987), \textquotedblleft Sensitivity analysis for certain
permutation inferences in matched observational studies,\textquotedblright%
\ \textit{Biometrika}, 74, 13-26.%

\setlength{\hangindent}{12pt}
\noindent
Rosenbaum, P. R. (2001), \textquotedblleft Replicating effects and
biases,\textquotedblright\ \textit{American Statistician}, 55, 223--227.%

\setlength{\hangindent}{12pt}
\noindent
Rosenbaum, P. R. (2002), \textit{Observational Studies} (2$^{nd}$ edition),
New York: Springer.%

\setlength{\hangindent}{12pt}
\noindent
Rosenbaum, P. R. (2004), \textquotedblleft Design sensitivity in observational
studies,\textquotedblright\ \textit{Biometrika}, 91, 153-164.%

\setlength{\hangindent}{12pt}
\noindent
Rosenbaum, P. R. (2008), \textquotedblleft Testing hypotheses in
order,\textquotedblright\ \textit{Biometrika}, 95, 248-252.%

\setlength{\hangindent}{12pt}
\noindent
Rosenbaum, P. R. (2010), \textit{Design of Observational Studies}, New York: Springer.%

\setlength{\hangindent}{12pt}
\noindent
Rosenbaum, P. R. (2011), \textquotedblleft A New u-Statistic with superior
design sensitivity in matched observational studies,\textquotedblright%
\ \textit{Biometrics}, 67, 1017-1027.%

\setlength{\hangindent}{12pt}
\noindent
Rosenbaum, P. R. (2012), \textquotedblleft Testing One Hypothesis Twice in
Observational Studies,\textquotedblright\ \textit{Biometrika}, 99, 763--774.%

\setlength{\hangindent}{12pt}
\noindent
Rosenbaum, P. R. (2015a), \textquotedblleft Bahadur efficiency of sensitivity
analyses in observational studies,\textquotedblright\ \textit{Journal of the
American Statistical Association}, 110, 205-217.%

\setlength{\hangindent}{12pt}
\noindent
Rosenbaum, P. R. (2015b), \textquotedblleft How to see more in observational
studies: Some new quasi-experimental devices,\textquotedblright%
\ \textit{Annual Review of Statistics and Its Application}, 2, 21-48.%

\setlength{\hangindent}{12pt}
\noindent
Rosenbaum, P. R. (2016), \textquotedblleft Using Scheff\'{e} projections for
multiple outcomes in an observational study of smoking and periodontal
disease,\textquotedblright\ \textit{Annals of Applied Statistics}, 10, 1447-1471.%

\setlength{\hangindent}{12pt}
\noindent
Rosenbaum, P.R. and Silber, J.H. (2009a), \textquotedblleft Sensitivity
analysis for equivalence and difference in an observational study of neonatal
intensive care units,\textquotedblright\ \textit{Journal of the American
Statistical Association}, 104, 501-511.%

\setlength{\hangindent}{12pt}
\noindent
Rosenbaum, P. R. and Silber, J. H. (2009b), \textquotedblleft Amplification of
sensitivity analysis in observational studies,\textquotedblright%
\ \textit{Journal American Statistical Association,} 104, 1398-1405.
\ (\texttt{amplify} function in the \texttt{R} package \texttt{sensitivitymv})%

\setlength{\hangindent}{12pt}
\noindent
Rubin, D. B. (1974), \textquotedblleft Estimating causal effects of treatments
in randomized and nonrandomized studies,\textquotedblright\ \textit{Journal of
Educational Psychology}, 66, 688-701.%

\setlength{\hangindent}{12pt}
\noindent
Sackrowitz, H. and Samuel-Cahn, E. (1999), \textquotedblleft$P$-values as
random variables --- expected $P$-values,\textquotedblright\ \textit{American
Statistician}, 53, 326-331.%

\setlength{\hangindent}{12pt}
\noindent
Shaffer, J. P. (1974), \textquotedblleft Bidirectional unbiased
procedures,\textquotedblright\ \textit{Journal of the American Statistical
Association}, 69, 437-439.%

\setlength{\hangindent}{12pt}
\noindent
Stephenson, W. R. (1981), \textquotedblleft A general class of one-sample
nonparametric test statistics based on subsamples,\textquotedblright%
\ \textit{Journal of the American Statistical Association}, 76, 960--966.%

\setlength{\hangindent}{12pt}
\noindent
Wiens, B. L. (2003), \textquotedblleft A fixed sequence Bonferroni procedure
for testing multiple endpoints,\textquotedblright\ \textit{Pharmaceutical
Statistics}, 2, 211-215.%

\setlength{\hangindent}{12pt}
\noindent
Zhang, K., Small, D. S., Lorch, S., Srinivas, S. and Rosenbaum, P. R. (2011),
\textquotedblleft Using split samples and evidence factors in an observational
study of neonatal outcomes,\textquotedblright\ \textit{Journal of the American
Statistical Association}, 106, 511-524.%

\setlength{\hangindent}{12pt}
\noindent
Zubizarreta, J. R., Cerd\'{a}, M. and Rosenbaum, P. R. (2013),
\textquotedblleft Effect of the 2010 Chilean earthquake on posttraumatic
stress: reducing sensitivity to unmeasured bias through study
design,\textquotedblright\ \textit{Epidemiology}, 24, 79-87.%

\begin{table}[ht]
\caption{ Bonferroni correction (B) versus cross-screening (CS) with
$K=46$ outcomes and $\Gamma
=1$, 1.25, 9, and 11.  A blank indicates that no test was performed.}%
\label{tabCompare}
\small
\renewcommand{\arraystretch}{1}
\setlength{\tabcolsep}{0.4em}
\centering\begin{tabular}{| llc | cc | cc | cc | cc |}
\hline\multicolumn{3}{|c}{Sensitivity Parameter} & \multicolumn{2}{|c}%
{$\Gamma=1$} & \multicolumn{2}{|c}{$\Gamma=1.25$} & \multicolumn{2}%
{|c}{$\Gamma=9$}
& \multicolumn{2}{|c|}{$\Gamma=11$} \\ \hline\multicolumn{2}{|c}{Outcome}
& $k$ & B & CS & B & CS & B & CS & B & CS \\
\hline LBXSAL &  Albumin & 1 & 1.000 &  & 1.000 &  & 1.000 &  & 1.000 &  \\
LBXSBU &  Blood urea nitrogen & 2 & 1.000 &  & 1.000 &  & 1.000 &  & 1.000 &  \\
LBXSCA &  Total calcium & 3 & 1.000 &  & 1.000 &  & 1.000 &  & 1.000 &  \\
LBXSCH &  Cholesterol & 4 & 1.000 &  & 1.000 &  & 1.000 &  & 1.000 &  \\
LBXSCK &  Creatine phospho. & 5 & 1.000 &  & 1.000 &  & 1.000 &  & 1.000 &  \\
LBXSCR &  Creatinine & 6 & 1.000 &  & 1.000 &  & 1.000 &  & 1.000 &  \\
LBXSGB &  Globulin & 7 & 1.000 &  & 1.000 &  & 1.000 &  & 1.000 &  \\
LBXSGL &  Glucose & 8 & 1.000 &  & 1.000 &  & 1.000 &  & 1.000 &  \\
LBXSIR &  Iron & 9 & 1.000 &  & 1.000 &  & 1.000 &  & 1.000 &  \\
LBXSPH &  Phosphorus & 10 & 1.000 &  & 1.000 &  & 1.000 &  & 1.000 &  \\
LBXSTB &  Total bilirubin & 11 & 1.000 &  & 1.000 &  & 1.000 &  & 1.000 &  \\
LBXSTP &  Total protein & 12 & 1.000 &  & 1.000 &  & 1.000 &  & 1.000 &  \\
LBXSTR &  Triglycerides & 13 & 1.000 &  & 1.000 &  & 1.000 &  & 1.000 &  \\
LBXSUA &  Uric acid & 14 & 1.000 &  & 1.000 &  & 1.000 &  & 1.000 &  \\
WTSH2YR &  Blood metals & 15 & 0.024 &  & 1.000 &  & 1.000 &  & 1.000 &  \\
LBXBPB &  Blood lead & 16 & 1.000 &  & 1.000 &  & 1.000 &  & 1.000 &  \\
LBXBCD &  Blood cadmium & 17 & 1.000 &  & 1.000 &  & 1.000 &  & 1.000 &  \\
LBXTHG &  Blood mercury & 18 & 0.000 & 0.000 & 0.000 & 0.000 & 0.095 & 0.015 & 0.505 & 0.035 \\
LBXBSE &  Blood selenium & 19 & 0.380 &  & 1.000 &  & 1.000 &  & 1.000 &  \\
LBXBMN &  Blood manganese & 20 & 1.000 &  & 1.000 &  & 1.000 &  & 1.000 &  \\
LBXIHG &  Mercury, inorganic & 21 & 0.000 &  & 0.000 &  & 1.000 &  & 1.000 &  \\
LBXBGE &  Mercury, ethyl & 22 & 1.000 &  & 1.000 &  & 1.000 &  & 1.000 &  \\
LBXBGM &  Mercury, methyl & 23 & 0.000 & 0.000 & 0.000 & 0.000 & 0.075 & 0.014 & 0.405 & 0.031 \\
LBDHDD &  HDL-Cholesterol  & 24 & 1.000 &  & 1.000 &  & 1.000 &  & 1.000 &  \\
LBXWBCSI &  White blood cell cnt. & 25 & 1.000 &  & 1.000 &  & 1.000 &  & 1.000 &  \\
LBXLYPCT &  Lymphocyte \% & 26 & 1.000 &  & 1.000 &  & 1.000 &  & 1.000 &  \\
LBXMOPCT &  Monocyte \% & 27 & 1.000 &  & 1.000 &  & 1.000 &  & 1.000 &  \\
LBXNEPCT &  Seg. neutrophils \% & 28 & 1.000 &  & 1.000 &  & 1.000 &  & 1.000 &  \\
LBXEOPCT &  Eosinophils \% & 29 & 1.000 & & 1.000 &  & 1.000 &  & 1.000 &  \\
LBXBAPCT &  Basophils \% & 30 & 1.000 &  & 1.000 &  & 1.000 &  & 1.000 &  \\
LBDLYMNO &  Lymphocyte \# & 31 & 1.000 &  & 1.000 &  & 1.000 &  & 1.000 &  \\
LBDMONO &  Monocyte \# & 32 & 1.000 &  & 1.000 &  & 1.000 &  & 1.000 &  \\
LBDNENO &  Seg. neutrophils \# & 33 & 1.000 &  & 1.000 &  & 1.000 &  & 1.000 &  \\
LBDEONO &  Eosinophils \# & 34 & 1.000 &  & 1.000 &  & 1.000 &  & 1.000 &  \\
LBDBANO &  Basophils \# & 35 & 1.000 &  & 1.000 &  & 1.000 &  & 1.000 &  \\
LBXRBCSI &  Red blood cell cnt. & 36 & 1.000 &  & 1.000 &  & 1.000 &  & 1.000 &  \\
LBXHGB &  Hemoglobin & 37 & 1.000 &  & 1.000 &  & 1.000 &  & 1.000 &  \\
LBXHCT &  Hematocrit & 38 & 1.000 &  & 1.000 &  & 1.000 &  & 1.000 &  \\
LBXMCVSI &  Mean cell volume & 39 & 1.000 &  & 1.000 &  & 1.000 &  & 1.000 &  \\
LBXMCHSI &  Mean cell hemoglobin & 40 & 1.000 &  & 1.000 &  & 1.000 &  & 1.000 &  \\
LBXMC &  MCHC & 41 & 1.000 &  & 1.000 &  & 1.000 &  & 1.000 &  \\
LBXRDW &  Red cell dist. & 42 & 0.520 &  & 1.000 &  & 1.000 &  & 1.000 &  \\
LBXPLTSI &  Platelet count & 43 & 1.000 &  & 1.000 &  & 1.000 &  & 1.000 &  \\
LBXMPSI &  Mean platelet vol.  & 44 & 1.000 &  & 1.000 &  & 1.000 &  & 1.000 &  \\
LBXGH &  Glycohemoglobin  & 45 & 1.000 &  & 1.000 &  & 1.000 &  & 1.000 &  \\
BPXSY &  Systolic BP & 46 & 0.523 &  & 1.000 &  & 1.000 &  & 1.000 &  \\
\hline\end{tabular}
\end{table}%
%

\begin{table}[ht]
\caption{ When $H_{k}%
$ is true and the bias in treatment assignment is at most $\Gamma'$,
the table gives the upper bound on the size of an $\alpha
=0.05$ level Wilcoxon test
and the upper bound on the expected $P$-value (EPV) if
the sensitivity analysis is performed with sensitivity parameter $\Gamma
\ge\Gamma'$}.\label{tabEPV}
\centering\begin{tabular}{ c c | c c c | c c c}
\hline& & \multicolumn{3}{|c}{ Bound on Size } & \multicolumn{3}%
{|c}{ Bound on EPV} \\ \hline$\Gamma'$ & $\Gamma
$ & $I=100$ & $I=250$ & $I=500$ & $I=100$ & $I=250$ & $I=500$ \\ \hline
1.00 & 1.00 & 0.05000 & 0.05000 & 0.05000 & 0.50 & 0.50 & 0.50 \\
1.00 & 1.10 & 0.01976 & 0.01077 & 0.00511 & 0.62 & 0.68 & 0.74 \\
1.00 & 1.25 & 0.00445 & 0.00074 & 0.00007 & 0.75 & 0.86 & 0.94 \\
1.10 & 1.25 & 0.01392 & 0.00586 & 0.00197 & 0.65 & 0.73 & 0.81 \\
1.25  & 1.25 & 0.05000 & 0.05000 & 0.05000  & 0.50 &  0.50 &  0.50 \\
1.25 & 1.50 & 0.00750 & 0.00194 & 0.00033 & 0.71 & 0.81 & 0.89 \\
1.50 & 2.00 & 0.00204 & 0.00017 & 0.00001 & 0.80 & 0.91 & 0.97 \\
\hline\end{tabular}
\end{table}%
%

\begin{table}[ht]
\caption
{Number of pairs required for power 0.8 with family-wise error rate 0.05 using the
two-sided t-test with a Bonferroni correction for performing $K$ hypothesis tests.
When the value in column $K$ is more than twice the value in column $K=1$,
the sample size is in \textbf{bold}.}
\label{tabttest}
\centering\begin{tabular}{c|rrrrrr}
\hline$\tau$  & K=1 & K=10 & K=50 & K=100 & K=250 & K=500 \\ \hline
.1 &  787 & 1335 & \textbf{1713} & \textbf{1874} & \textbf{2087}
& \textbf{2247} \\
.3 &  89 & 152 & \textbf{195} & \textbf{214} & \textbf{238} & \textbf{256} \\
.5 &  33 & 57 & \textbf{74} & \textbf{81} & \textbf{90} & \textbf{97} \\
\hline\end{tabular}
\end{table}%
%

\begin{table}[ht]
\caption
{Comparison of the large-sample power of 0.05-level testing of one, one-sided hypothesis
by cross-screening (CS)
versus testing $K=1$, 10, 50, 100, 250 and 500 two-sided hypotheses using
the Bonferroni method.
Cases in which the Bonferroni method has lower power than CS are in \textbf
{bold}.}
\label{tabAsymp}
\centering\begin{tabular}{r|r|rrrrrr}
\hline& & \multicolumn{6}{|c}{ Bonferroni with $K$ hypotheses} \\ \hline
ncp &  CS & 1 & 10 & 50 & 100 & 250 & 500 \\  \hline
1 & 0.0591 & 0.2926 & 0.0818 & \textbf{0.0303} & \textbf{0.0194}
& \textbf{0.0106} & \textbf{0.0066} \\
2 & 0.3929 & 0.8074 & 0.5085 & \textbf{0.3220} & \textbf{0.2571}
& \textbf{0.1866} & \textbf{0.1441} \\
3 & 0.8285 & 0.9888 & 0.9244 & 0.8295 & \textbf{0.7769} & \textbf{0.6997}
& \textbf{0.6376} \\
\hline\end{tabular}
\end{table}%
\begin{table}
\caption{Simulated power at $\Gamma
=2$ with Normal errors for $K$ independent outcomes in $I$ matched pairs.
For outcome $k=1$, hypotheses $H_{1}%
$ is false with pair differences symmetric about $\tau_{1} \ne0$.
For outcome $k=2$, hypotheses $H_{2}$ is false whenever $\tau_{2} \ne0$.
Null hypotheses $k=3,\dots
,K$ are true, with pair differences that are symmetric about zero.
Tabulated values are the proportion of rejections at the 0.05-level in 10,000 replicates.  The column
labeled $H_{12}$ is the proportion of times both $H_{1}$ and $H_{2}%
$ were both rejected. In each
sampling situation, the highest power for each hypothesis is in \textbf{bold}.
}\label{tabSim}
\renewcommand{\arraystretch}{1}
\setlength{\tabcolsep}{0.4em}
\small
\begin{tabular}{cccl|rrr|rrr|rrrrrrrrrrrrr}
\hline& & & & \multicolumn{3}{c|}{Bonferroni} & \multicolumn{3}{c|}%
{Cross screening} & \multicolumn{3}{c}{Single screening} \\
$K$ & $I$ & $(\tau_{1},\tau_{2})$ & Statistic & $H_1$ & $H_2$ & $H_{12}%
$ & $H_1$ & $H_2$ & $H_{12}$ & $H_1$ &
$H_2$ & $H_{12}$ \\
\hline
100 & 100 & (0.5,0.0) & Wilcoxon  & $1.4$ & $$ & $$ & $23.0$ & $$ & $$ & $11.2$ & $$ & $$ \\
&  &  & (8,5,8)  & $2.9$ & $$ & $$ & $\mathbf{28.4}%
$ & $$ & $$ & $13.2$ & $$ & $$ \\
&  &  & Adaptive  & $1.3$ & $$ & $$ & $27.4$ & $$ & $$ & $11.2$ & $$ & $$ \\ \hline
&  & (0.5,0.5) & Wilcoxon  & $1.2$ & $1.5$ & $0.0$ & $15.2$ & $16.0$ & $3.6$ & $10.3$ & $10.8$ & $1.8$ \\
&  &  & (8,5,8)  & $3.0$ & $2.8$ & $0.1$ & $19.1$ & $19.6$ & $5.4$ & $11.8$ & $11.5$ & $2.0$ \\
&  &  & Adaptive  & $1.2$ & $1.0$ & $0.0$ & $\mathbf{21.3}$ & $\mathbf
{21.6}$ & $\mathbf{6.7}$ & $10.9$ & $11.4$ & $1.8$ \\ \hline
&  & (0.6,0.4) & Wilcoxon  & $9.6$ & $0.2$ & $0.0$ & $47.8$ & $3.7$ & $2.6$ & $28.8$ & $2.5$ & $1.1$ \\
&  &  & (8,5,8)  & $14.5$ & $0.2$ & $0.0$ & $\mathbf{52.1}%
$ & $5.0$ & $3.5$ & $28.5$ & $3.0$ & $1.1$ \\
&  &  & Adaptive  & $7.8$ & $0.1$ & $0.0$ & $50.9$ & $\mathbf{6.1}%
$ & $\mathbf{4.2}$ & $25.5$ & $3.7$ & $1.3$ \\ \hline
& 250 & (0.5,0.0) & Wilcoxon  & $18.7$ & $$ & $$ & $72.7$ & $$ & $$ & $57.8$ & $$ & $$ \\
&  &  & (8,5,8)  & $31.1$ & $$ & $$ & $82.1$ & $$ & $$ & $60.9$ & $$ & $$ \\
&  &  & Adaptive  & $35.0$ & $$ & $$ & $\mathbf{88.3}%
$ & $$ & $$ & $60.0$ & $$ & $$ \\ \hline
&  & (0.5,0.5) & Wilcoxon  & $19.4$ & $17.9$ & $3.5$ & $53.7$ & $53.1$ & $39.8$ & $54.0$ & $53.3$ & $34.0$ \\
&  &  & (8,5,8)  & $30.5$ & $30.5$ & $9.5$ & $66.7$ & $67.1$ & $56.3$ & $59.0$ & $59.2$ & $39.1$ \\
&  &  & Adaptive  & $36.3$ & $35.1$ & $12.6$ & $\mathbf{79.3}$ & $\mathbf
{79.2}$ & $\mathbf{70.7}$ & $59.5$ & $59.3$ & $38.9$ \\ \hline
&  & (0.6,0.4) & Wilcoxon  & $71.4$ & $0.9$ & $0.6$ & $90.4$ & $20.7$ & $20.6$ & $78.5$ & $15.3$ & $13.7$ \\
&  &  & (8,5,8)  & $83.0$ & $2.8$ & $2.3$ & $93.6$ & $31.8$ & $31.7$ & $79.9$ & $21.4$ & $19.3$ \\
&  &  & Adaptive  & $84.2$ & $4.2$ & $3.5$ & $\mathbf{95.5}$ & $\mathbf
{44.4}$ & $\mathbf{44.2}$ & $78.4$ & $25.5$ & $22.2$ \\ \hline
& 500 & (0.5,0.0) & Wilcoxon  & $66.7$ & $$ & $$ & $96.3$ & $$ & $$ & $95.3$ & $$ & $$ \\
&  &  & (8,5,8)  & $82.8$ & $$ & $$ & $98.8$ & $$ & $$ & $95.9$ & $$ & $$ \\
&  &  & Adaptive  & $92.7$ & $$ & $$ & $\mathbf{99.8}%
$ & $$ & $$ & $95.0$ & $$ & $$ \\ \hline
&  & (0.5,0.5) & Wilcoxon  & $66.8$ & $66.5$ & $44.6$ & $91.3$ & $91.1$ & $88.2$ & $93.8$ & $93.9$ & $90.1$ \\
&  &  & (8,5,8)  & $82.6$ & $82.7$ & $68.6$ & $96.5$ & $96.8$ & $95.7$ & $95.7$ & $95.9$ & $92.6$ \\
&  &  & Adaptive  & $92.7$ & $92.7$ & $85.9$ & $\mathbf{99.4}$ & $\mathbf
{99.5}$ & $\mathbf{99.2}$ & $95.4$ & $95.1$ & $91.2$ \\ \hline
&  & (0.6,0.4) & Wilcoxon  & $99.3$ & $4.8$ & $4.8$ & $99.4$ & $45.9$ & $45.9$ & $95.5$ & $46.1$ & $46.0$ \\
&  &  & (8,5,8)  & $99.9$ & $13.1$ & $13.1$ & $99.8$ & $64.9$ & $64.9$ & $97.0$ & $60.7$ & $60.5$ \\
&  &  & Adaptive  & $\mathbf{100.0}$ & $30.8$ & $30.8$ & $\mathbf
{100.0}$ &  $\mathbf{86.5}$ & $\mathbf{86.5}%
$ & $98.4$ & $70.8$ & $70.6$ \\ \hline
500 & 500 & (0.5,0.0) & Wilcoxon  & $47.5$ & $$ & $$ & $96.3$ & $$ & $$ & $90.5$ & $$ & $$ \\
&  &  & (8,5,8)  & $68.4$ & $$ & $$ & $99.0$ & $$ & $$ & $90.7$ & $$ & $$ \\
&  &  & Adaptive  & $83.0$ & $$ & $$ & $\mathbf{99.8}%
$ & $$ & $$ & $87.8$ & $$ & $$ \\ \hline
&  & (0.5,0.5) & Wilcoxon  & $48.5$ & $48.8$ & $23.7$ & $91.1$ & $91.2$ & $88.3$ & $89.4$ & $89.5$ & $82.0$ \\
&  &  & (8,5,8)  & $68.7$ & $69.0$ & $47.7$ & $96.8$ & $96.7$ & $95.7$ & $90.3$ & $90.6$ & $82.9$ \\
&  &  & Adaptive  & $83.7$ & $83.9$ & $70.3$ & $\mathbf{99.5}$ & $\mathbf
{99.5}$ & $\mathbf{99.3}$ & $88.2$ & $88.7$ & $79.1$ \\ \hline
&  & (0.6,0.4) & Wilcoxon  & $97.8$ & $1.8$ & $1.7$ & $99.5$ & $45.9$ & $45.9$ & $94.7$ & $39.0$ & $38.6$ \\
&  &  & (8,5,8)  & $99.3$ & $5.8$ & $5.7$ & $99.7$ & $63.9$ & $63.9$ & $96.2$ & $49.5$ & $49.0$ \\
&  &  & Adaptive  & $99.9$ & $16.2$ & $16.2$ & $\mathbf{100.0}$ & $\mathbf
{85.7}$ & $\mathbf{85.7}$ & $96.9$ & $57.2$ & $56.2$ \\
\hline\end{tabular}
\end{table}%

\end{document}